\newtheorem{definicion}{Definición}[chapter]
\newtheorem{proposicion}{Proposición}[chapter]
\newtheorem{teorema}{Teorema}[chapter]
\newtheorem{corolario}{Corolario}[chapter]
\newtheorem{ejemplo}{Ejemplo}
\title{Diferenciación Automática Anidada. Un enfoque algebraico}
\author{Juan Luis Valerdi Cabrera}
\date{}
\begin{document}

\begin{titlepage}
	\begin{center}
		\includegraphics[width=2cm]{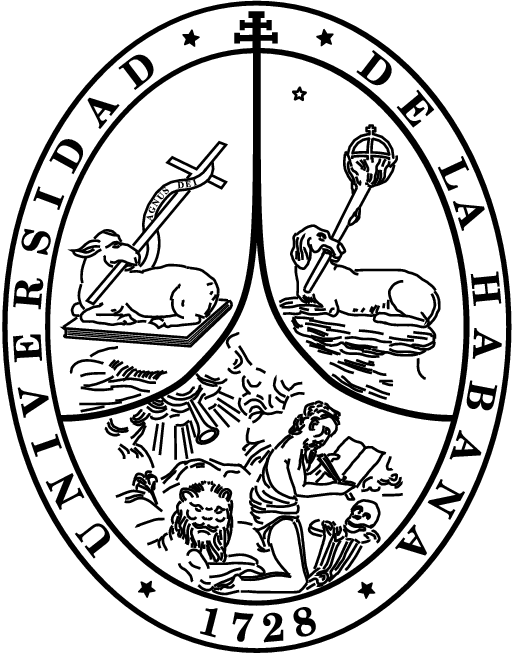}
	\end{center}
	\begin{center}
	Universidad de La Habana\\
	Facultad de Matemática y Computación
	\end{center}
	
	\vspace{0.7cm}
	
	\hrulefill
	
	\vspace{2pt}
	
	\begin{center}
		\huge \textbf{Diferenciación Automática Anidada.\\ Un enfoque algebraico.}
	\end{center}
	
	\vspace{2pt}
	
	\hrulefill

    \vspace*{0.75cm}
    
    \begin{center}
      \large
      Trabajo presentado como parte de los requisitos para optar por\\
      el t'itulo de Licenciado en Matemática
      
    \end{center}
    
	\vspace*{0.5cm}

	\begin{center}	
			{\Large \emph{Autor:}}\\
            {\Large Juan Luis Valerdi Cabrera}

            \vspace*{0.8cm}
            
			{\Large \emph{Tutor:}}\\
            {\Large \emph{Msc.} Fernando Raul Rodriguez Flores }\\[5pt]
	\end{center}
		
	\vspace{27pt}
	
	\begin{center}
		\large Junio de 2012
	\end{center}
\end{titlepage}
\selectlanguage{spanish} 

\begin{abstract}	

  En este trabajo se presenta una propuesta para realizar
  Diferenciación Automática Anidada utilizando cualquier biblioteca de
  Diferenciación Automática que permita sobrecarga de operadores.
  Para calcular las derivadas anidadas en una misma evaluación de la
  función, la cual se asume que sea anal'itica, se trabaja con el modo
  forward utilizando una nueva estructura llamada SuperAdouble, que
  garantiza que se aplique correctamente la diferenciación automática
  y se calculen el valor y la derivada que se requiera. También se
  presenta un enfoque algebraico de la Diferenciación Automática y en
  particular del espacio de los SuperAdoubles.
  \\
  
\end{abstract}

\tableofcontents

\chapter*{Introducción}\label{cha:intro}

 \addcontentsline{toc}{chapter}{Introducci'on}

 Muchas veces es necesario calcular el valor numérico de una función y
 a la vez obtener una aproximación precisa de las derivadas. Ejemplos de
 esto se pueden encontrar en los métodos numéricos de la programación
 no lineal \cite{luenberger}, en los métodos implícitos para la
 resolución numérica de ecuaciones diferenciales \cite{hairer} y en
 problemas inversos en la asimilación de datos \cite{griewank}.

 La Diferenciación Automática (AD, por sus siglas en inglés) es una
 herramienta que puede usarse para calcular las derivadas de cualquier
 función diferenciable, de forma automática. En este contexto
 ``automática''\ significa que el usuario solo necesita escribir el
 código fuente de la función en un lenguaje determinado, y la
 herramienta puede calcular las derivadas solicitadas sin incurrir en
 errores de truncamiento \cite{griewank}.

 Existen herramientas de diferenciación automática para varios lenguajes
 de programación, como ADOL-C \cite{adolc} para programas escritos 
 en C y C++, ADIFOR \cite{adifor} para programas escritos en Fortran, y 
 ADOLNET \cite{adolnet} para lenguajes de la plataforma .NET. 
 
 Una forma de crear estas herramientas de AD en lenguajes que soporten 
 sobrecarga de operadores es crear una clase llamada Adouble que contenga 
 dos campos de tipo double, uno llamado valor y otro derivada. Para esta 
 clase se sobrecargan los operadores aritm'eticos y las funciones elementales, 
 para que, al mismo tiempo que se calculen los resultados de las operaciones, 
 se calculen el valor de las derivadas de esas operaciones \cite{griewank}. 

 Sin embargo, existen ocasiones en las que es necesario calcular unas derivadas
 como paso intermedio para el cálculo de otras y con las t'ecnicas usuales de
 AD esto no es posible.  Un ejemplo sencillo donde se aprecia este fenómeno es 
 el siguiente.

 \begin{ejemplo} \label{ej:anidaad}
  Sea $f$ la función definida como  
  \begin{displaymath}
   f(x)= x^2+\dot{g}(x^3),
  \end{displaymath}
  donde $\dot{g}$ es la función derivada de $g(x)=e^{x^2}.$ Suponiendo que 
  sólo se conoce el código fuente de $g$ y de $f$, se desea calcular el valor y la 
  derivada de $f$ en el punto $x=x_0.$
 \end{ejemplo}

 Estas derivadas que intervienen en el cálculo de otras reciben
 el nombre de derivadas anidadas, y aunque existen herramientas 
 de Diferenciaci'on Autom'atica como ADOL-C \cite{riehme} que 
 permiten el cálculo de estas derivadas, no existe un método 
 general que funcione en todas las herramientas existentes \cite{siskind}, 
 por lo que puede decirse que no existe una solución general para 
 este problema. 
 
 En este trabajo se propone una metodología y una herramienta
 computacional para calcular derivadas anidadas de funciones
 anal'iticas utilizando cualquier biblioteca de AD que soporte
 sobrecarga de operadores.

 La metodolog'ia propuesta en este trabajo parte de la creación de un
 nuevo tipo de dato que se llamar'a SuperAdouble. Al igual que el tipo
 de dato Adouble, contendr'a dos campos, uno para almacenar el valor y
 otro para almacenar la derivada. A diferencia de los Adoubles, en el
 que los campos valor y derivada son de tipo double, en los
 Superadoubles, estos campos ser'an de tipo Adouble. Para esta nueva
 clase tambi'en se sobrecargan los operadores aritm'eticos y las
 funciones elementales para calcular simult'aneamente el valor de las
 operaciones y sus derivadas.  Al realizar operaciones con estos
 SuperAdoubles es posible calcular derivadas anidadas. El objetivo
 fundamental de este trabajo es demostrar esta afirmaci'on.
 
 Para demostrar que al realizar las operaciones con
 SuperAdoubles se obtienen los valores y las derivadas deseadas se
 realiza una presentación algebraica del espacio de los números
 Adoubles (que son el fundamento de la Diferenciación Automática) y de
 los números SuperAdoubles (que son el fundamento de este trabajo).

 Este documento está estructurado de la siguiente forma: en el
 Cap'itulo \ref{cha:preliminares} se presentan conceptos b'asicos de
 la diferenciaci'on autom'atica y del 'algebra que se usar'an en este
 trabajo; en el Cap'itulo \ref{cha:ad} se presentan los elementos
 básicos de la AD, sus modos e implementaciones; en el Capítulo
 \ref{cha:metodologia} se introduce una metodología para utilizar AD
 cuando haya que calcular derivadas anidadas; el Capítulo
 \ref{cha:adouble} se dedica a definir y explicar la estructura
 algebraica que sustenta la AD, y en el Capítulo
 \ref{cha:superadouble} se define $\mathbb{SA}$, una estructura
 algebraica que ayuda a demostrar que los números superadoubles pueden
 utilizarse para calcular derivadas anidadas.
 
\chapter{Preliminares} \label{cha:preliminares}

 En este cap'itulo se presentan las definiciones y propiedades necesarias
 para el desarrollo de este trabajo como derivadas anidadas, estructuras
 algebraicas y elementos de lenguajes de programaci'on. En la Secci'on \ref{sec:nombres} 
 se presentan las definiciones relacionadas con las derivadas anidadas y en la 
 Secci'on \ref{sec:algebra}, las estructuras algebraicas que se usar'an en los 
 Cap'itulos \ref{cha:adouble} y \ref{cha:superadouble}.

 \section{Derivadas anidadas} \label{sec:nombres}
 
 En esta secci'on se definen algunos conceptos relacionado con la diferenciaci'on 
 anidada como variable original, variable anidada, funci'on anidada, derivada
 original, derivada anidada y derivada compuesta.
 
 \begin{definicion}
  Sea $f(x)=\dot{g}(h(x))$ una funci'on real donde $g$ y $h$ son funciones
  conocidas. Se les
  llamar'a a las diferentes partes de $f$ de la siguiente forma:
  \begin{itemize}
	 \item A $x$ se le llamar'a variable original.
	 \item A $y = h(x)$ se le llamar'a variable anidada.
	 \item A $g(x)$ se le llamar'a funci'on anidada.
	 \item A $g'(x)=\frac{dg(h(x))}{dx}$ se le llamar'a derivada original.
	 \item A $\dot{g}(x)=\frac{dg(h(x))}{dh}$ se le llamar'a derivada anidada.
	 \item A $\dot{g}'(x)=\frac{d\dot{g}(h(x))}{dx}=\frac{d^2g(h(x))}{dxdh}$ se le llamar'a derivada compuesta.  
	\end{itemize}
 \end{definicion}
 
 En el caso de que $h(x)=x$ los conceptos de derivada original y derivada
 anidada coinciden, lo cual implica que la derivada compuesta sea 
 $\frac{d^2g(x)}{dx^2}$.
 
 \begin{ejemplo}
  La funci'on del Ejemplo \ref{ej:anidaad} es
  \begin{displaymath}
   f(x)= x^2+\dot{g}(x^3),	
  \end{displaymath} 
  con $g(x)=e^{x^2}.$
  En este caso se tiene lo siguiente:
  \begin{itemize}
 	 \item La variable $x$ es la variable original. 
	 \item $y=x^3$ es la variable anidada.
	 \item La funci'on $g(x)=e^{x^2}$ es la funci'on anidada.
	 \item La derivada original es $g'(x)=6x^5e^{x^6}$.  
	 \item La derivada anidada es $\dot{g}(x)=2x^3e^{x^6}$.
	 \item La derivada compuesta es $\dot{g}'(x)=6x^2e^{x^6}(1+2x^6)$.
  \end{itemize}
 \end{ejemplo}
 
 \section{Estructuras algebraicas} \label{sec:algebra}
 
 En esta secci'on se presentan las definiciones y resultados del 'algebra abstracta que se utilizar'an en los Cap'itulos \ref{cha:adouble}
 y \ref{cha:superadouble}. Este contenido se puede encontrar en \cite{queysanne} 
 y \cite{hazewinkel}.
 
 \begin{definicion}
  Un conjunto $G$ provisto de una ley interna $+$, es un grupo si $\forall a,b\in G$ 
  se cumple las siguiente propiedades:
  \begin{enumerate}
	 \item La suma es asociativa: $(a+b)+c=a+(b+c)$ 
	 \item Existe un elemento neutro: $\exists e\in G: a+e=e+a=a$
	 \item Cada elemento posee opuesto: $\exists a'\in G:a+a'=a'+a=e$
  \end{enumerate}
 \end{definicion}
 
 Al elemento $e$ de la definici'on anterior se le llama neutro o cero del grupo.
 En el caso de que la operaci'on $+$ sea conmutativa, se dice que el grupo es abeliano.
 
 \begin{definicion} \label{def:anillo}
  Un conjunto $A$ provisto de una adici'on $+$ y de una multiplicaci'on $\cdot$ (el s'imbolo del operador se puede omitir) que sean
  internas posee una estructura de anillo si $\forall a,b,c\in A$ se cumple:
  \begin{enumerate}
	 \item $A$ posee una estructura de grupo abeliano para la adici'on
	 \item El producto es asociativo: $(ab)c=a(bc)$
	 \item El producto distribuye respecto a la suma a la izquierda: $a(b+c)=ab+ac$ 
	 \item El producto distribuye respecto a la suma a la derecha: $(b+c)a=ba+ca$
  \end{enumerate}
 \end{definicion}
 
 Si la multiplicaci'on posee un elemento neutro $e$ se dice que el anillo es
 unitario. Siendo $A$ un anillo unitario, si dado $a\in A$ existe un $a'$
 tal que $aa'=a'a=e$, se dice que $a$ es inversible en $A$.
 
 Si la multiplicaci'on en un anillo $A$ es conmutativa se dice que $A$ es
 un anillo conmutativo.
 
 \begin{definicion}
  Se llama subanillo de un anillo $A$ a un subconjunto no vac'io $B$ de $A$ 
  con las operaciones de $A$ internas en $B$ que hacen de $B$ un anillo.
 \end{definicion}
 
 \begin{teorema}
  Para que un subcojunto no vac'io $B$ de un anillo $A$ sea un subanillo de $A$
  es necesario y suficiente que para todo $a\in A$ y $b\in B$, entonces 
  \begin{displaymath}
    a-b\in B \wedge ab\in B.
  \end{displaymath}
 \end{teorema}
 \begin{proof}
  V'ease en \cite{queysanne}.
 \end{proof}
 
 \begin{teorema} \label{teo:binomio}
  En un anillo conmutativo se cumple 
  \begin{equation}
   (a_1+a_2+ \ldots +a_m)^n=
   \sum{\frac{n!}{p_1!p_2!\ldots p_m!}a_1^{p_1}\ldots(a_m)^{p_m}}, 
  \end{equation}
  donde la suma del segundo miembro se extiende a todas las combinaciones de 
  $p_1, p_2,\ldots ,p_m$ de enteros no negativos tales que
  \begin{displaymath}
   \sum_{i=1}^mp_i=n.	
  \end{displaymath}
 \end{teorema}
 \begin{proof}
  V'ease en \cite{queysanne}.
 \end{proof}
 
 \begin{definicion}
  Un conjunto $K$ provisto de una adici'on y de una multiplicaci'on posee
  una estructura de cuerpo para esas dos operaciones si:
  \begin{enumerate}
	 \item $K$ posee una estructura de anillo para esas dos operaciones.
	 \item $L^*=K-\{0\}$ (donde $0$ es el elemento neutro de la adici'on) posee una
	       estructura de grupo para la multiplicaci'on.
  \end{enumerate}
 \end{definicion}
  
 Si la multiplicaci'on en un cuerpo $K$ es conmutativa se dice que $K$ es
 un cuerpo conmutativo.
 
 \begin{definicion}
  Dado un cuerpo conmutativo $K$, de elementos neutros $0$ y $1$ con
  respecto a la suma y la multiplicaci'on, se dice que un conjunto $E$
  provisto de una operaci'on interna y de una operaci'on externa cuyo
  dominio de operadores es $K$, tiene una estructura de espacio vectorial
  sobre $K$ si $\forall\lambda,\mu\in K$ y $\forall x,y\in E$ se cumple:
  \begin{enumerate}
	 \item $E$ es un grupo abeliano para sus operaci'on interna
	 \item La operaci'on externa cumple: $\lambda(\mu x)=(\lambda\mu)x$
	 \item La operaci'on externa cumple: $1x=x$
	 \item La operación externa es distributiva con relación a la suma en $K: (\lambda+\mu)x=\lambda x+\mu x$
	 \item La operación externa es distributiva con relaci'on a la operaci'on interna de $E: \lambda(x+y)=\lambda x+ \lambda y$
  \end{enumerate}
 \end{definicion}
  
 \begin{definicion}
  Un conjunto $A$ provisto de una estructura vectorial sobre un cuerpo $K$ 
  y de una estructura de anillo se dice que posee una estructura de
  'algebra sobre $K$ si $\forall \alpha,\beta\in K$ y $\forall x,y\in A$
  se cumple
  \begin{displaymath}
	 (\alpha x)(\beta y) = (\alpha\beta)(xy).
  \end{displaymath}  
 \end{definicion} 
  
 Estos conceptos y resultados se usar'an en los Cap'itulos \ref{cha:adouble} y
 \ref{cha:superadouble} para la representaci'on algebraica de la AD.

\chapter{Diferenciación Automática}\label{cha:ad}

 En este cap'itulo se realiza una introducci'on a la diferenciaci'on
 autom'atica, en la que se presentar'an sus modos de aplicaci'on, ejemplos y
 v'ias de implementaci'on. Para la implementaci'on se presentar'an c'odigos
 en el lenguaje de programaci'on C\# con el objetivo de mostrar la aplicaci'on de la AD 
 en los ejemplos que se presentan en este cap'itulo.

 \section{Introducción a la AD}
  La diferenciación automática, también conocida como 
  diferenciación algorítmica, es un conjunto de técnicas y herramientas 
  que permiten evaluar numéricamente la derivada de una función definida 
  mediante su código fuente en un lenguaje de programación.  

  La AD está basada en el hecho de que para evaluar una función en un
  lenguaje de programación dado se ejecutan una secuencia de operaciones
  aritméticas (adición, sustracción, multiplicación y división) y llamados a
  funciones elementales (exp, log, sen, cos, etc.). Aplicando la regla de
  la cadena al mismo tiempo que se realizan estas operaciones, se pueden 
  calcular derivadas de cualquier orden tan exactas como la aritmética 
  de la máquina lo permita \cite{griewank}.

  La base de la AD es la descomposición de diferenciales que provee la regla 
  de la cadena. Para una composición de funciones $f(x) = g(h(x))$ se obtiene, 
  a partir de la regla de la cadena,
  
  \begin{displaymath}
	 \frac{df}{dx} = \frac{dg}{dh} \frac{dh}{dx}.
  \end{displaymath}

  Existen dos modos para aplicar la AD, el modo hacia adelante 
  o modo forward y el modo hacia atrás o modo reverse. El modo forward se obtiene 
  al aplicar la regla de la cadena de derecha a izquierda, 
  es decir, primero se calcula $dh / dx$ y después $dg / dh$, mientras que el modo 
  reverse se obtiene cuando se aplica la regla de la cadena de izquierda a derecha \cite{griewank}.  
 
  A continuaci'on se presentan las ideas fundamentales del modo hacia adelante,
  que por ser m'as sencillo e intuitivo, permite explicar con mayor claridad
  el funcionamiento de la diferenciaci'on autom'atica. El lector interesado
  en el modo hacia atr'as puede consultar \cite{griewank}.

  \subsection{Modo forward}

  Los siguientes ejemplos ilustran el funcionamiento del modo forward:

  \begin{ejemplo} \label{ej:ad1}
   Se desea calcular el valor y la derivada de 
   \begin{displaymath}
    f(x) = x^2\cdot\cos(x) 	
   \end{displaymath}
   en el punto $x=\pi$.
  \end{ejemplo}
  
  La siguiente tabla contiene las operaciones necesarias para evaluar esta funci'on
  en una computadora.
  
  \begin{displaymath}
	 \begin{tabular}{|ccccc|}
    \hline
    $w_1$ & = & $x$ & = & $\pi$ \\
    \hline
    $w_2$ & = & $w_1^2$ & = & $\pi^2$ \\
    $w_3$ & = & $\cos(w_1)$ & = & $-1$ \\
    $w_4$ & = & $w_2w_3$ & = & $-\pi^2$ \\
    \hline
    $y$ & = & $w_4$ & = & $-\pi^2$ \\
    \hline
   \end{tabular}
  \end{displaymath}

  Para aplicar el modo hacia adelante, se almacena en una nueva variable
  la derivada de cada operaci'on con respecto a la variable
x. Esta derivada se puede calcular al mismo tiempo 
  que se realiza la operaci'on. La siguiente tabla ilustra este procedimiento.
  
  \begin{displaymath}
	 \begin{tabular}{|ccccc|ccccc|}
    \hline
    $w_1$ & = & $x$ & = & $\pi$ & $\dot{w}_1$ & = & $\dot{x}$ & = & $1$ \\
    \hline
    $w_2$ & = & $w_1^2$ & = & $\pi^2$ & $\dot{w}_2$ & = & $2w_1\dot{w}_1$ & = & $2\pi$ \\
    $w_3$ & = & $\cos(w_1)$ & = & $-1$ & $\dot{w}_3$ & = & $\sin(w_1)\dot{w}_1$ & = & $0$\\
    $w_4$ & = & $w_2w_3$ & = & $-\pi^2$ & $\dot{w}_4$ & = & $w_2\dot{w}_3+\dot{w}_2w_3$ & = & $-2\pi$\\
    \hline
    $y$ & = & $w_4$ & = & $-\pi^2$ & $\dot{y}$ & = & $\dot{w}_4$ & = & $-2\pi$ \\
    \hline
   \end{tabular}
  \end{displaymath}

  En la tabla anterior, las variables $\dot{w}_i$ almacenan la 
  derivada con respecto a x de la operaci'on $w_i$. Como 
  se desea calcular la derivada con respecto a la variable $x$, entonces la nueva variable
  $\dot{w}_1=\frac{dw_1}{dx}$ se inicializa con el valor $1$. Al finalizar el c'alculo, en la variable 
  $\dot{w}_4$ se tiene el valor de la derivada de $f(x)$ con respecto a $x$.
  
  El siguiente ejemplo muestra c'omo se puede aplicar el modo hacia adelante para calcular derivadas
  parciales de funciones de m'as de una variable.

  \begin{ejemplo} \label{ej:ad}
   Se desea calcular el valor y el gradiente de 
   \begin{displaymath}
	  f(x_1,x_2) = x_1x_2+\sen(x_1)
   \end{displaymath}
   en $x_1=\pi$ y $x_2=2.$
  \end{ejemplo}
  
  La función $f$ de este ejemplo puede expresarse mediante las siguientes operaciones.
  \begin{displaymath}
	 \begin{tabular}{|ccccc|}
    \hline
    $w_1$ & = & $x_1$ & = & $\pi$ \\
    $w_2$ & = & $x_2$ & = & $2$ \\
    \hline
    $w_3$ & = & $w_1w_2$ & = & $2\pi$ \\
    $w_4$ & = & $\sen(w_1)$ & = & $0$ \\
    $w_5$ & = & $w_3+w_4$ & = & $2\pi$ \\
    \hline
    $y$ & = & $w_5$ & = & $2\pi$ \\
    \hline
   \end{tabular}
  \end{displaymath}
  
  Para calcular el gradiente utilizando el modo forward hay que realizar el mismo 
  procedimiento del ejemplo anterior, pero en este caso dos veces: una por cada
  derivada parcial. A continuaci'on se muestra el c'alculo de la derivada parcial
  $\frac{df}{dx_1}$. 
  
  \begin{displaymath}
	 \begin{tabular}{|ccccc|ccccc|}
    \hline
    $w_1$ & = & $x_1$ & = & $\pi$ & $\dot{w}_1$ & = & $\dot{x}_1$ & = & $1$ \\
    $w_2$ & = & $x_2$ & = & $2$ & $\dot{w}_2$ & = & $\dot{x}_2$ & = & $0$ \\
    \hline
    $w_3$ & = & $w_1w_2$ & = & $2\pi$ & $\dot{w}_3$ & = & $\dot{w}_1w_2+w_1\dot{w}_2$ & = & $2$ \\
    $w_4$ & = & $\sen(w_1)$ & = & $0$ & $\dot{w}_4$ & = & $\cos(w_1)\dot{w}_1$ & = & $-1$ \\
    $w_5$ & = & $w_3+w_4$ & = & $2\pi$ & $\dot{w}_5$ & = & $\dot{w}_3+\dot{w}_4$ & = & $1$ \\
    \hline
    $y$ & = & $w_5$ & = & $2\pi$ & $\dot{y}$ & = & $\dot{w}_5$ & = & $1$ \\
    \hline
   \end{tabular}
  \end{displaymath}

  En este caso, la variable $\dot{w}_2$ se inicializa con valor $0$ porque
  $\frac{\partial x_2}{\partial x_1}=0, w_2=x_2$ y $\dot{w}_2=\frac{\partial w_2}{\partial x_1}$.
  Al finalizar los c'alculos, en la variable $\dot{w}_5$ se tiene el valor
  de la derivada $\frac{\partial f(x)}{\partial x_1}$.

  El procedimiento presentado en los ejemplos anteriores se puede generalizar 
  de la siguiente forma. 
  
  Sea $f:R^n\rightarrow R^m$ una función diferenciable. Se denotarán las $n$ 
  variables reales de entrada como
  \begin{displaymath}
	 w_{i-n}=x_i\qquad \textrm{y}\qquad \dot{w}_{i-n}=\dot{x}_i,\ \ 
	 \textrm{con } i=1\ldots n \textrm{\ \ y \ } \dot{x}_i=1.
  \end{displaymath}
  Las variables $\dot{x}_i$ se inicializan con el valor de las
  derivadas parciales de cada variable $x_i$ con respecto a la variable original.
  Si se desea calcular la derivada parcial de $f$ respecto a $x_i$, entonces 
  $\dot{x}_i$ deber'ia ser $1$ y $\dot{x}_j$ deber'ia ser $0$ para toda $j$ 
  diferente de $i$.
  
  Como $f$ está compuesta por funciones elementales, es conveniente denotarlas de alguna forma. 
  A la $j$-ésima función elemental se le denotará por $\phi_j$. 
  
  Para descomponer a $f$ en operaciones 
  elementales se denotarán nuevas variables $w_i$ y $\dot{w}_i$ como
  
  \begin{equation}
   w_i=\phi_i(w_j)\qquad\textrm{con\ }j\prec i, \label{eq:notaciondependencia}
  \end{equation}
  
  \begin{displaymath}
	 \dot{w}_i=\sum_{j\prec i}\frac{\partial}{\partial w_j}\phi_i(w_j)\dot{w}_j,
  \end{displaymath}

  donde en este caso $i=1\ldots l$ y $l$ es el número de operaciones elementales que componen a $f$. 
  La simbología $\phi_i(w_j)$ con $j\prec i$ significa que $\phi_i$ depende directamente de $w_j$, 
  es decir, que para evaluar $\phi_i$ se usa explícitamente $w_j$. También es usual denotar 
  (\ref{eq:notaciondependencia}) como
  
  \begin{displaymath}
	 w_i=\phi_i(w_j)_{j\prec i}.
  \end{displaymath}

  Con las notaciones anteriores se puede expresar el procedimiento
  general de la siguiente forma:

	\begin{displaymath}
	 \begin{tabular}{|l l|}
	  \hline
	  $w_i=x_i$ & \\
	  & $i=1\ldots n$\\
	  $\dot{w}_{i-n}=\dot{x}_i$ & \\
	  \hline
	  $w_i=\phi_i(w_j)_{j\prec i}$ & \\
	  & $i=1\ldots l$\\
	  $\dot{w}_i=\sum_{j\prec i}\frac{\partial}{\partial w_j}\phi_i(w_j)\dot{w}_j$ & \\
	  \hline
	  $y_{m-i}=w_{l-i}$ & \\
	  & $i=m-1\ldots 0$\\
	  $\dot{y}_{m-i}=\dot{w}_{l-i}$ & \\
	  \hline
	 \end{tabular}
  \end{displaymath}
  
  Como se puede apreciar en la tabla anterior, el procedimiento general del modo forward 
  est'a estructurado en tres fases: primero se inicializan las variables, despu'es se 
  ejecutan las operaciones y se calculan las derivadas, y finalmente se devuelven los 
  valores y la derivada calculada.
  
  En esta secci'on se ha presentado la AD desde un punto de vista te'orico. En la siguiente
  secci'on se muestra una posible v'ia para implementar estas ideas computacionalmente.

 \section{Implementación} \label{sec:implementacion}
 
 Existen dos estrategias para lograr AD: transformación del 
 código fuente y sobrecarga de operadores \cite{griewank}.

 Para utilizar la metodolog'ia propuesta en este trabajo resulta m'as conveniente 
 utilizar la sobrecarga de operadores, por lo que en esta secci'on se presentan sus
 elementos fundamentales. El lector interesado en la transformaci'on de c'odigo puede
 consultar \cite{griewank1}. 

 Para implementar la AD utilizando sobrecarga de operadores se debe crear una nueva clase, 
 la cual se llamar'a en este trabajo Adouble siguiendo la notaci'on de \cite{griewank}.
  
 En esta clase se deben definir dos campos de números reales, uno, que usualmente recibe 
 el nombre de valor, para almacenar el resultado de la operaci'on que este Adouble
 representa, y otro, que usualmente recibe el nombre de derivada, para almacenar la
 derivada de esa operaci'on. Estos campos valor y derivada son la representaci'on 
 computacional de las variables $w_i$ y $\dot{w}_i$. 
  
 Una vez definida esta nueva clase, se sobrecargan
 los operadores aritm'eticos y las funciones elementales, para que, al mismo tiempo que
 se calculan los resultados de las operaciones, tambi'en se pueda calcular el valor de 
 las derivadas.
 
 La siguiente tabla muestra los valores de los campos valor y derivada
 de cada uno de los adoubles que intervienen en la evaluaci'on de la funci'on
 $f(x) = x^2\cdot{}cos(x)$. N'otese que cuando se realizan todas las operaciones, en el
 campo derivada del adouble $y$ se obtiene el valor de la derivada de la funci'on
 en el punto en que fue evaluada.
 
 \begin{displaymath}
  \begin{tabular}{|l|l|l|}
   \hline
   Adouble $w_1$ & $w_1$.valor $= \pi$ & $w_1$.derivada $= 1$ \\
   \hline
   Adouble $w_2 = w_1^2$ & $w_2$.valor $=\pi^2$ & $w_2$.derivada $= 2\pi$ \\
   Adouble $w_3 = \cos(w_1)$ & $w_3$.valor $=-1$ & $w_3$.derivada $=0$\\
   Adouble $w_4 = w_2*w_3$ & $w_4$.valor $=-\pi^2$ & $w_4$.derivada $=-2\pi$\\
   \hline
   Adouble\ \ \ $y = w_4$ &\ \ $y$.valor\ $=-\pi^2$ &\ \  $y$.derivada\ $=-2\pi$\\
   \hline
  \end{tabular}
 \end{displaymath} 
  
 Esta vía tiene la ventaja de que es f'acil de implementar y que para utilizarlo solo hay
 que modificar ligeramente el c'odigo fuente de la funci'on que se desea derivar \cite{griewank1}.

 En la Figura \ref{cod1} se muestra la implementaci'on de un programa en C\# que calcula el valor de 
 $f(x) = x^2\cdot\cos(x)$ en $x = 2$; y en la Figura \ref{cod2} las modificaciones necesarias para calcular la derivada en ese punto.
 
 \begin{figure}[htb]
  \begin{verbatim}
class Example
{
  static void Main()	
  {
    double x = 2; //Inicialización de x
                  //para evaluar f en 2
    double y = x*x + cos(x); //Cálculo de f
    Console.WriteLine("El valor de f en x = 2 es: " + y);
  }
}
  \end{verbatim}
	\caption{Código de $f(x) = x^2\cdot\cos(x)$}\label{cod1}
 \end{figure}
 
 \begin{figure}[htb]
  \begin{verbatim}
class Example
{
  static void Main()	
  {
    Adouble x = new Adouble(); \\Inicialización de x como 
                               \\Adouble
    x.valor = 2; \\ Se quiere el valor y la derivada
                 \\ en x = 2
    x.derivada = 1; \\La derivada de x con respecto a x es 1
    Adouble y = x*x + cos(x);
    Console.WriteLine("El valor de f en x = 2 es: " + w4.valor);
    Console.WriteLine("La derivada de f en x = 2 es: " + w4.derivada);
  }
}
  \end{verbatim}
	\caption{Código modificado de $f(x) = x^2\cdot\cos(x)$}\label{cod2}
 \end{figure}
 
 Esta idea se puede modificar f'acilmente para calcular derivadas parciales 
 de funciones de m'as de una variable. El lector interesado en este tema puede consultar
 \cite{griewank} y \cite{adolc}.
 
 En este cap'itulo se han presentado las ideas fundamentales de la AD 
 en las que se sustenta la soluci'on propuesta en el cap'itulo siguiente 
 para el c'alculo de derivadas anidadas utilizando diferenciaci'on autom'atica.
  
\chapter{C'alculo de Derivadas Anidadas} \label{cha:metodologia}

 En este cap'itulo se presenta una metodolog'ia para calcular derivadas anidadas
 utilizando cualquier biblioteca de Diferenciaci'on Autom'atica mediante sobrecarga 
 de operadores. Esta metodolog'ia se presenta para funciones de una variable real
 con el objetivo de facilitar su exposici'on, pero su extensi'on a funciones de varias variables
 es posible de la misma forma que las t'ecnicas de diferenciaci'on autom'atica se 
 extienden a funciones varias variables.

 La metodolog'ia que se propone es la siguiente: 
 
 \begin{enumerate}
  \item Partir de una biblioteca de Diferenciaci'on Autom'atica en la que exista
        un tipo de dato Adouble.
  \item Definir un nuevo tipo de dato llamado SuperAdouble. Esta nueva estructura tendr'a 
        dos campos: valor y derivada, y ambos campos ser'an del tipo de dato Adouble
        definido en la biblioteca del paso 1. El hecho de que 
        estos campos sean de tipo Adouble permitir'a calcular las derivadas anidadas.
  \item Definir funciones para construir superadoubles a partir de adoubles y para obtener
        los campos valor y derivada de variable de tipo SuperAdouble. Estas funciones se presentan en la 
        Secci'on \ref{paso2}.
  \item Modificar el c'odigo fuente de la funci'on anidada para calcular las derivadas anidadas.
 \end{enumerate}      
 
 En las siguientes secciones se detallan cada uno estos pasos.
 
 \section{La clase SuperAdouble} \label{paso1}
 
 El segundo paso propone crear una nueva clase llamada SuperAdouble. Este nuevo tipo de dato tiene un
 campo valor y un campo derivada, ambos de tipo Adouble. Al igual que la clase Adouble, los operadores
 aritm'eticos y las funciones elementales se sobrecargan para que cuando operen con valores de este tipo
 permitan calcular los valores y las derivadas anidadas que se deseen. 
          
 El campo valor de los SuperAdouble es de tipo Adouble, por lo que este contiene
 a un campo valor y un campo derivada, ambos de tipo double. Es decir, 
 para una variable de tipo SuperAdouble tiene sentido hablar de los campos 
 valor.valor y valor.derivada. Por el mismo
 motivo, en una variable de tipo superadouble existen los campos derivada.valor y derivada.derivada.
 
 Estos cuatro campos: valor.valor, valor.derivada, derivada.valor y derivada.
 derivada se relacionan con las derivadas anidadas a trav'es del siguiente
 resultado, que es el principal aporte de este trabajo.
 
 \begin{teorema} \label{teo:teorema}
  En un objeto de tipo SuperAdouble se cumple que:
  \begin{itemize}
 	 \item valor.valor es el valor de la funci'on anidada. 
	 \item valor.derivada es la derivada original.
	 \item derivada.valor contiene la derivada anidada.
	 \item derivada.derivada contiene la derivada compuesta.
  \end{itemize}
 \end{teorema}
 \begin{proof}
  El objetivo del Cap'itulo \ref{cha:superadouble} es demostrar este teorema.
 \end{proof}
 
 Una vez que est'e definida esta nueva clase, es necesario definir funciones
 para crear instancias de esta clase y obtener las derivadas y valores deseados.
 Estas funciones se muestran en la siguiente secci'on.
              
 \section{Relaci'on entre Adoubles y Superadoubles.} \label{paso2}
              
 Para calcular derivadas anidadas son necesarias tres funciones que relacionen
 variables de tipo Adouble y variables de tipo SuperAdouble. Estas funciones 
 son \emph{push}, \emph{popV} y \emph{popD}. A continuaci'on se presentan 
 cada una de ellas y su relaci'on con el c'alculo de derivadas anidadas.
 
 La función \emph{push} recibe como par'ametro una variable $x$ de tipo Adouble y devuelve 
 una variable $X$ de tipo SuperAdouble. El campo valor de la
 variable $X$ ser'ia el Adouble $x$, y el campo derivada ser'ia un Adouble con valor
 1 y derivada 0. 
 
 \begin{ejemplo}
  Si se tiene un Adouble $x = (4, 10)$, donde la primera componente del vector es el campo valor del
  Adouble, y la segunda, el campo derivada, entonces al aplicar push
  se obtiene un SuperAdouble $X$ con los siguientes campos:
  \begin{itemize}
 	 \item $X.$valor.valor = 4.
	 \item $X.$valor.derivada = 10.
	 \item $X.$derivada.valor = 1.
	 \item $X.$derivada.derivada = 0.
  \end{itemize}
 \end{ejemplo}
 
 Por otra parte, las funciones \emph{popV} y \emph{popD} reciben como argumento un 
 SuperAdouble y devuelven un Adouble. La funci'on \emph{popV} devuelve el campo valor
 del SuperAdouble y \emph{popD} devuelve su campo derivada. 
 
 \begin{ejemplo}
  Si se aplica la funci'on popV al Superadouble $X$ definido en el ejemplo
  anterior se obtiene un Adouble $x$ con los siguientes campos:
  \begin{itemize}
   \item $x.$valor = $4$.
	 \item $x.$derivada = $10$.
  \end{itemize} 
 
  Por otra parte, si se aplica la funci'on popD al
  mismo SuperAdouble $X$ se obtiene:
  \begin{itemize}
	 \item $x.$valor = $1$.
	 \item $x.$derivada = $0$.
  \end{itemize} 
 \end{ejemplo}
 
 Para calcular derivadas anidadas, estas funciones deben usarse para modificar el 
 c'odigo fuente de la funci'on anidada como se muestra en la siguiente
 secci'on.
  
 \section{Modificaci'on del c'odigo de la funci'on anidada}
 
 En esta secci'on se presentan las modificaciones que son necesarias realizar
 en el c'odigo fuente de la funci'on anidada para utilizar los SuperAdouble, las
 funciones \emph{push, popD} y \emph{popV} y obtener los valores de las derivadas anidadas.
  
 Siguiendo con el Ejemplo \ref{ej:anidaad},
 \begin{displaymath}
   f(x)= x^2+\dot{g}(x^3), \qquad g(x)=e^{x^2}, 
 \end{displaymath}	
 la Figura \ref{fig:modif} muestra el c'odigo que permite calcular 
 el valor de la funci'on y su derivada en el punto $x=3$ usando los SuperAdoubles y las funciones 
 presentadas en la secci'on anterior.
 
 \begin{figure}[htb]
  \begin{verbatim}
class Example
{
  static void Main()	
  {
    Adouble x = new Adouble();
    x.valor = 3;
    x.derivada = 1;
    Adouble w1 = x*x;             \\ w1 = x^2
    Adouble w2 = w1*x;            \\ w2 = x^3
    SuperAdouble Y1 = push(w2);   \\ crear el superadouble
    SuperAdouble Y2 = exp(Y1*Y1); \\ g(x) = exp(x^2)
    SuperAdouble W3 = exp(W2);
    Adouble w3 = popD(Y2);        \\ obtener la derivada anidada
    Adouble w4 = w1 + w3;         \\ w4 = x^2 + gdot(x^3)
    Console.WriteLine("El valor de f en x = 3 es: " + 
                       w5.valor);
    Console.WriteLine("La derivada de f en x = 3 es: " + 
                       w5.derivada);
  }
}  
  \end{verbatim}
  \caption{Aplicaci'on de los SuperAdoubles en la AD.} \label{fig:modif}
 \end{figure}
 
 En el c'odigo de la Figura \ref{fig:modif} se empieza trabajando con
 números de tipo Adouble para calcular el valor y la derivada de las
 operaciones que no pertenezcan a la funci'on anidada. La l'inea
 \begin{verbatim}
  SuperAdouble Y1 = push(w2);
 \end{verbatim}
 \vspace*{-0.5cm}
 crea un objeto SuperAdouble a partir de la variable anidada. Despu'es
 de la creaci'on de W1 se efect'uan las operaciones de la funci'on anidada utilizando
 Superadoubles. Con la l'inea
 \begin{verbatim}
  Adouble w3 = popD(Y2);
 \end{verbatim}
 \vspace*{-0.5cm}
 se obtiene un Adouble que en su campo valor tiene la
 derivada anidada, y en su campo derivada tiene la derivada
 compuesta. Finalmente, en la l'inea
 \begin{verbatim}
  Adouble w4 = w1 + w3;
 \end{verbatim}
 \vspace*{-0.5cm} se termina el c'alculo de la funcion original.
 
 Como todas las operaciones se han realizado con adoubles, en el campo
 derivada del Adouble w4 se tiene la derivada de la funci'on original, que era
 el objetivo que se persegu'ia.
  
 Este ejemplo ilustra que usando la metodolog'ia propuesta es posible
 calcular derivadas anidadas. En los pr'oximos cap'itulos se
 fundamentan los resultados de esta secci'on desde una representaci'on
 algebraica y se demuestra Teorema \ref{teo:teorema} que es el
 basamento matem'atico de la metodolog'ia.
 
\chapter{Espacio Adouble}\label{cha:adouble}

 En este cap'itulo se presenta la diferenciaci'on autom'atica desde un punto 
 de vista algebraico. Para representar la clase Adouble mediante una estructura 
 algebraica se extienden los n'umeros reales agreg'andole una nueva componente 
 y definiendo una aritm'etica que garantice que en la primera componente se 
 obtengan los resultados de cada operaci'on, y en la segunda, las derivadas 
 de estas operaciones. 
 
 Esta presentaci'on de la AD desde el 'algebra no es nueva. La estructura
 algebraica que representa a los n'umeros adoubles se conoce en la literatura
 como n'umeros duales y tiene aplicaciones en la f'isica y la mec'anica \cite{cheng}. 
 
 En este cap'itulo se presentan las principales propiedades de estos n'umeros
 duales, que en este trabajo se les llamar'a
 n'umeros adoubles, o simplemente adoubles. Al conjunto de todos los adoubles
 se les llamar'a espacio Adouble.
 
 El objetivo fundamental de presentar estas propiedades y sus demostraciones,
 es que sirven de base para, en el Cap'itulo \ref{cha:superadouble}, presentar 
 el espacio SuperAdouble, sus propiedades fundamentales y demostrar que 
 utiliz'andolas se pueden calcular derivadas anidadas. 
   
 La estructura de este cap'itulo es la siguiente: en la Secci'on \ref{sec:ad1}
 se definir'a el espacio Adouble y se demostrar'a algunas de sus propiedades
 elementales; y en la Secci'on \ref{sec:ad3}
 se definen funciones de variable adouble y se demuestra que al evaluar una funci'on anal'itica en un
 adouble, se obtiene un nuevo adouble que en su primera componente tiene el
 valor de la evaluaci'on de la funci'on, y en la segunda, el valor de la derivada
 de la funci'on.
 
 \section{Definición de $\mathbb{A}$. Operaciones elementales} \label{sec:ad1}

 Observando c'omo funciona la AD desde el punto de vista computacional, es posible definir
 una estructura algebraica que represente la clase Adouble. Esta se crea expandiendo el 
 álgebra de los números reales con otra componente y definiendo una nueva aritmética
 que permita obtener, en
la segunda componente las derivadas de las operaciones, como se
muestra en la siguiente definici'on.
 
 \begin{definicion}
  El espacio Adouble $\mathbb{A}$ es el conjunto de los pares ordenados $(x,\dot{x})$ de números reales  
  con la suma y multiplicación definidos por
  \begin{displaymath}
	 (a,\dot{a})+(b,\dot{b})=(a+b,\dot{a}+\dot{b}),
  \end{displaymath}
  \begin{displaymath}
	 (a,\dot{a})(b,\dot{b})=(ab,a\dot{b}+\dot{a}b).
  \end{displaymath}
 \end{definicion}
 
 A los elementos del espacio Adouble se les llamar'a números adoubles o simplemente adoubles.
 
 En las operaciones de suma y multiplicaci'on definidas en este espacio se aprecia que en la 
 primera componente se tiene el c'alculo usual sobre los reales y en la segunda componente se 
 obtiene la aritm'etica de las derivadas. 
 
 A partir de estas dos operaciones es posible obtener la resta y la divisi'on as'i como funciones 
 que al ser evaluadas en un adouble devuelvan otro adouble que en la primera componente tengan el 
 resultado de evaluar la funci'on y en la segunda componente la derivada de la funci'on.
 
 Para cumplir este objetivo se pudiera definir directamente estas operaciones, pero usando
 como base la suma y la multiplicaci'on se puede recurrir a la teor'ia de 'algebras para obtener
 estas definiciones. La siguiente proposici'on brinda informaci'on sobre estas operaciones.
 
 \begin{proposicion}
  $\mathbb{A}$ es un anillo conmutativo y unitario. El cero es (0,0) y la identidad multiplicativa
  es (1,0).
 \end{proposicion}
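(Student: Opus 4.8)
The plan is to verify the four axioms of Definici\'on \ref{def:anillo}, together with commutativity and the existence of a multiplicative identity, by reducing each one to the corresponding property of $\mathbb{R}$. First I would observe that $(\mathbb{A},+)$ is just $\mathbb{R}^2$ equipped with componentwise addition, so it is an abelian group: associativity and commutativity hold coordinate by coordinate, the neutral element is $(0,0)$, and the opposite of $(a,\dot{a})$ is $(-a,-\dot{a})$. This settles the requirement that $\mathbb{A}$ be an abelian group for the addition and also identifies the zero of the ring as $(0,0)$.

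Next I would check associativity of the multiplication. Expanding both $\bigl((a,\dot{a})(b,\dot{b})\bigr)(c,\dot{c})$ and $(a,\dot{a})\bigl((b,\dot{b})(c,\dot{c})\bigr)$, the first components are both $abc$, and the second components both reduce to $ab\dot{c}+a\dot{b}c+\dot{a}bc$ using only associativity, commutativity and distributivity in $\mathbb{R}$. The two distributive laws are handled in the same style: expanding $(a,\dot{a})\bigl((b,\dot{b})+(c,\dot{c})\bigr)$ and comparing it with $(a,\dot{a})(b,\dot{b})+(a,\dot{a})(c,\dot{c})$ amounts to distributivity in $\mathbb{R}$ in each coordinate, and the right distributive law is analogous (it will in any case also follow once commutativity is known).

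Finally, commutativity of the product is immediate from that of $\mathbb{R}$, since $(a,\dot{a})(b,\dot{b})=(ab,\,a\dot{b}+\dot{a}b)=(ba,\,b\dot{a}+\dot{b}a)=(b,\dot{b})(a,\dot{a})$, and $(1,0)$ is a multiplicative identity because $(a,\dot{a})(1,0)=(a\cdot 1,\ a\cdot 0+\dot{a}\cdot 1)=(a,\dot{a})$, which by commutativity is a two-sided identity. Together with the abelian group structure for the addition, this gives all the axioms and the stated values of the neutral elements. I do not expect a genuine obstacle here: every step collapses to an identity already valid in $\mathbb{R}$, and the only point requiring a little care is keeping track of the three summands that appear in the second coordinate when verifying associativity of the multiplication.
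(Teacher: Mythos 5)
Tu demostraci\'on es correcta y sigue esencialmente el mismo camino que el texto: el art\'iculo se limita a afirmar que la proposici\'on ``se obtiene directamente a partir de la definici\'on del espacio $\mathbb{A}$'', y lo que t\'u haces es precisamente esa verificaci\'on directa de los axiomas, reduciendo cada uno a la propiedad correspondiente de $\mathbb{R}$. Todos los c\'alculos (en particular el de la asociatividad del producto, cuyas segundas componentes coinciden en $ab\dot{c}+a\dot{b}c+\dot{a}bc$) son correctos.
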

 \begin{proof}
  Se obtiene directamente a partir de la definici'on del espacio $\mathbb{A}$
 \end{proof}
 
 El hecho de que $\mathbb{A}$ sea un anillo significa que todo elemento tiene un opuesto para la suma y 
 en este caso el opuesto de $(a,\dot{a})$ es $(-a,-\dot{a})$
 
 \begin{proposicion}
  Cuando se restan dos adoubles, en la primera componente se obtiene el resultado de la resta, y en la segunda,
  su derivada.
 \end{proposicion}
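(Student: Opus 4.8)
The statement to prove is: restando dos adoubles, en la primera componente se obtiene la resta usual, y en la segunda, la derivada de la resta. First I would recall that la proposici\'on anterior ya establece que $\mathbb{A}$ es un anillo, de modo que la resta est\'a definida como $(a,\dot a)-(b,\dot b) := (a,\dot a)+\bigl(-(b,\dot b)\bigr)$, y que el opuesto de $(b,\dot b)$ es $(-b,-\dot b)$, tal como se observ\'o en el comentario que sigue a esa proposici\'on. El plan es sustituir esta expresi\'on del opuesto en la definici\'on de la suma y simplificar.

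**Pasos.** Concretamente, primero escribir\'ia
\begin{displaymath}
 (a,\dot a)-(b,\dot b) = (a,\dot a)+(-b,-\dot b).
\end{displaymath}
Luego aplicar\'ia la definici\'on de la suma en $\mathbb{A}$ para obtener
\begin{displaymath}
 (a,\dot a)+(-b,-\dot b) = \bigl(a+(-b),\ \dot a+(-\dot b)\bigr) = (a-b,\ \dot a-\dot b).
\end{displaymath}
Finalmente interpretar\'ia el resultado: la primera componente $a-b$ es exactamente el valor de la resta de las dos cantidades representadas, y la segunda componente $\dot a-\dot b$ coincide con la derivada de esa resta, pues si $\dot a$ y $\dot b$ son las derivadas de las magnitudes $a$ y $b$ respecto de la variable original, entonces por linealidad de la derivada $\tfrac{d}{dx}(a-b)=\dot a-\dot b$. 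Esto cierra la demostraci\'on.

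**Principal obst\'aculo.** No hay un obst\'aculo t\'ecnico real: la prueba es una verificaci\'on directa a partir de la definici\'on del anillo $\mathbb{A}$ y de la f\'ormula del opuesto. El \'unico punto que conviene cuidar es de presentaci\'on, no de contenido: dejar expl\'icito que la resta en $\mathbb{A}$ se hereda de la estructura de anillo ya demostrada (para no circular), y articular con claridad por qu\'e la segunda componente del resultado es leg\'itimamente ``la derivada de la resta'' y no simplemente ``la diferencia de las derivadas'' --- es decir, invocar la linealidad del operador de derivaci\'on. Por ser un resultado tan elemental, esperar\'ia que la demostraci\'on en el texto sea de apenas dos o tres l\'ineas, remitiendo a la proposici\'on precedente.
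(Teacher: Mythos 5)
Tu propuesta es correcta y sigue esencialmente el mismo camino que la demostraci\'on del texto: definir la resta como la suma con el opuesto $(-b,-\dot b)$, aplicar la definici\'on de la suma en $\mathbb{A}$ para obtener $(a-b,\dot a-\dot b)$, e interpretar la segunda componente como la derivada de la resta por linealidad. La \'unica diferencia es de detalle expositivo: el texto condensa estos pasos en una sola l\'inea.
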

 \begin{proof}
  La resta es una operaci'on que se define como sumar con el opuesto, por lo
  que la resta de dos adoubles se define como:
  \begin{displaymath}
   (a,\dot{a})-(b,\dot{b})=(a-b,\dot{a}-\dot{b}), 	
  \end{displaymath}
  lo cual es el resultado que se quer'ia: la primera y segunda componente coinciden con la operaci'on y 
  derivada de la resta.
 \end{proof}
 
 La derivada del inverso es $\dot{a}/a^2$. Con la multiplicaci'on definida sobre $\mathbb{A}$ se puede demostrar
que en la segunda componente del inverso de un adouble se obtiene
precisamente la derivada del inverso. La siguiente proposici'on demuestra esto.
 
 \begin{proposicion}
  El inverso de un adouble $(a,\dot{a})$ es $(\frac{1}{a},-\frac{\dot{a}}{a^2})$.
 \end{proposicion}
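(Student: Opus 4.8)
El plan es una verificaci'on directa apoyada en dos hechos ya establecidos en este cap'itulo: la regla de multiplicaci'on $(a,\dot{a})(b,\dot{b})=(ab,a\dot{b}+\dot{a}b)$ y que la identidad multiplicativa de $\mathbb{A}$ es $(1,0)$. Primero observar'ia que la afirmaci'on presupone $a\neq 0$: si $a=0$, la primera componente de cualquier producto $(0,\dot{a})(b,\dot{b})$ vale $0\cdot b=0$, de modo que nunca puede igualar a $1$ y el adouble no ser'ia inversible en $\mathbb{A}$. Esta restricci'on es natural, pues la derivada del inverso, $-\dot{a}/a^2$, tampoco est'a definida cuando $a=0$.

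Supuesto $a\neq 0$, propondr'ia como candidato a inverso el adouble $\left(\frac{1}{a},-\frac{\dot{a}}{a^2}\right)$ y calcular'ia el producto
\begin{displaymath}
 (a,\dot{a})\left(\frac{1}{a},-\frac{\dot{a}}{a^2}\right)=\left(a\cdot\frac{1}{a},\; a\cdot\left(-\frac{\dot{a}}{a^2}\right)+\dot{a}\cdot\frac{1}{a}\right)=\left(1,\;-\frac{\dot{a}}{a}+\frac{\dot{a}}{a}\right)=(1,0).
\end{displaymath}
Como $\mathbb{A}$ es un anillo conmutativo, el producto en el orden opuesto coincide, por lo que no hace falta repetir la cuenta; esto basta para concluir que $\left(\frac{1}{a},-\frac{\dot{a}}{a^2}\right)$ es el inverso de $(a,\dot{a})$.

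No espero ning'un obst'aculo real en esta demostraci'on: la 'unica sutileza es hacer expl'icita la hip'otesis $a\neq 0$ y, si se quiere, comentar que la segunda componente del inverso coincide precisamente con la derivada del inverso, lo que conecta este resultado con la interpretaci'on de la AD desarrollada en el resto del cap'itulo. Todo lo dem'as se reduce a la simplificaci'on algebraica mostrada arriba.
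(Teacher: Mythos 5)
Your proposal is correct and takes essentially the same approach as the paper: a direct computation with the multiplication rule of $\mathbb{A}$. The only cosmetic difference is that the paper solves the system $(a,\dot{a})(x,\dot{x})=(1,0)$ for the unknown inverse while you verify the stated candidate directly; both reduce to the same algebra, and your remark that $a\neq 0$ is required also appears in the paper immediately after its proof.
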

 \begin{proof}
  Hallar el inverso de $(a,\dot{a})$ es equivalente a resolver la ecuaci'on
 \begin{displaymath}
  (a,\dot{a})(x,\dot{x})=(1,0),	
 \end{displaymath}
 donde $x$ y $\dot{x}$ son inc'ognitas. Esta ecuaci'on es equivalente al  
 sistema de ecuaciones
 \begin{displaymath}
  \left\{\begin{array}{l}
          ax=1\\
          a\dot{x}+\dot{a}x=0
         \end{array}\right.
 \end{displaymath}
 que tiene solución
 \begin{displaymath}
  x=\frac{1}{a}\qquad \textrm{y} \qquad \dot{x}=-\frac{\dot{a}}{a^2}.
 \end{displaymath}
 \end{proof}
 
 De la proposici'on anterior se deduce que en el caso de que $a=0$, los adoubles no tienen inverso.
 
 La divisi'on entre dos elementos de un anillo se realiza multiplicando uno por el inverso del otro. La siguiente
 proposici'on proporciona este resultado. 
 
 \begin{proposicion}
  El resultado de dividir $(a,\dot{a})$ por $(b,\dot{b})$ es 
  \begin{displaymath}
   \Big(\frac{a}{b},\frac{b\dot{a}-a\dot{b}}{b^2}\Big).
  \end{displaymath}
 \end{proposicion}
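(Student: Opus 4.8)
La idea es que, igual que en cualquier anillo unitario, la divisi'on entre dos adoubles no es m'as que la multiplicaci'on del primero por el inverso del segundo. El plan es, por tanto, apoyarme en las dos proposiciones anteriores: la que proporciona la f'ormula del inverso de un adouble y la regla de multiplicaci'on definida en $\mathbb{A}$.

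Primero escribir'ia la divisi'on como
\begin{displaymath}
 (a,\dot{a}) / (b,\dot{b}) = (a,\dot{a})\,(b,\dot{b})^{-1},
\end{displaymath}
observando que esta operaci'on solo tiene sentido cuando $(b,\dot{b})$ es inversible en $\mathbb{A}$, es decir, cuando $b\neq 0$, de acuerdo con la proposici'on anterior. A continuaci'on sustituir'ia la f'ormula del inverso, $(b,\dot{b})^{-1} = \big(\tfrac{1}{b},\, -\tfrac{\dot{b}}{b^2}\big)$, y aplicar'ia la definici'on del producto en $\mathbb{A}$, $(x,\dot{x})(y,\dot{y}) = (xy,\, x\dot{y}+\dot{x}y)$, tomando $x=a$, $\dot{x}=\dot{a}$, $y=\tfrac{1}{b}$ y $\dot{y}=-\tfrac{\dot{b}}{b^2}$. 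De aqu'i se obtiene, en la primera componente, $\tfrac{a}{b}$, y en la segunda, $-\tfrac{a\dot{b}}{b^2}+\tfrac{\dot{a}}{b}$.

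El 'ultimo paso ser'ia reducir la segunda componente a com'un denominador $b^2$:
\begin{displaymath}
 -\frac{a\dot{b}}{b^2}+\frac{\dot{a}}{b} = \frac{b\dot{a}-a\dot{b}}{b^2},
\end{displaymath}
con lo que se llega exactamente a la expresi'on del enunciado. Conviene adem'as observar que esta segunda componente coincide con la regla cl'asica del cociente, $\big(\tfrac{f}{g}\big)' = \tfrac{g f' - f g'}{g^2}$, lo que confirma, como en las proposiciones precedentes, que la primera componente recoge el resultado de la operaci'on y la segunda su derivada.

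No veo ning'un obst'aculo real en esta demostraci'on: se trata de un c'alculo directo en cuanto se dispone de la f'ormula del inverso. El 'unico punto que conviene cuidar es la hip'otesis $b\neq 0$, necesaria para que el divisor sea inversible en $\mathbb{A}$; fuera de eso, todo se reduce a encadenar las dos proposiciones anteriores y a simplificar una fracci'on.
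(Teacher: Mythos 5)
Tu propuesta es correcta y sigue esencialmente el mismo camino que la demostraci'on del texto: multiplicar $(a,\dot{a})$ por el inverso $\big(\tfrac{1}{b},-\tfrac{\dot{b}}{b^2}\big)$ y simplificar la segunda componente a $\tfrac{b\dot{a}-a\dot{b}}{b^2}$. La 'unica diferencia es que haces expl'icita la hip'otesis $b\neq 0$ y la comparaci'on con la regla del cociente, detalles que el texto menciona fuera de la demostraci'on.
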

 \begin{proof}
  Para dividir
  \begin{displaymath}
	 (a,\dot{a})/(b,\dot{b}),
  \end{displaymath}
  se calcula
  \begin{displaymath}
	 (a,\dot{a})\Big(\frac{1}{b},-\frac{\dot{b}}{b^2}\Big) = 
	 \Big(\frac{a}{b},\frac{b\dot{a}-a\dot{b}}{b^2}\Big).
  \end{displaymath}
 \end{proof}
 
 Al no tener sentido hallar el inverso para $b = 0$, tampoco lo tendr'a la divisi'on cuando se quiera
 dividir por un adouble que tenga la primera componente igual a cero. Se puede notar que la primera componente
 de la divisi'on contiene el resultado del cociente y la segunda componente la regla de la derivada para
 la divisi'on.
 
 Hasta el momento se han definido las operaciones suma, resta,
 multiplicaci'on y divisi'on entre adoubles, y en todos los casos, en
 la segunda componente se obtiene la derivada de la operaci'on. Para
 poder utilizar los n'umeros adoubles para realizar AD falta definir
 funciones sobre $\mathbb{A}$ con el objetivo de obtener la su
 evaluaci'on y su derivada en la primera y segunda componente,
 respectivamente. En la siguiente secci'on se definen estas funciones
 de variable adouble.
 
 \section{Funciones de variable adouble} \label{sec:ad3}
 
 La definici'on de funciones se facilita si se introduce el cambio de
 notaci'on que se realiza en los n'umeros duales \cite{cheng}. Para
 poder utilizar esta nueva notaci'on es necesario definir sobre este
 espacio una estructura de 'algebra.
 
 Para definir un 'algebra sobre $\mathbb{A}$ conviene definir una
 estructura vectorial y despu'es demostrar su compatibilidad con el
 anillo. La siguiente proposici'on permite que la definici'on de
 espacio vectorial resulte natural.

 \begin{proposicion} \label{pro:contieneR}
  $\mathbb{A}$ contiene a $\mathbb{R}$ como estructura de anillo.
 \end{proposicion}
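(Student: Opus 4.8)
La idea es exhibir una copia de $\mathbb{R}$ dentro de $\mathbb{A}$ mediante la aplicaci'on $\iota:\mathbb{R}\rightarrow\mathbb{A}$ definida por $\iota(a)=(a,0)$. El plan es demostrar que $\iota$ es un homomorfismo de anillos inyectivo; de ah'i se sigue que su imagen $\mathbb{R}'=\{(a,0):a\in\mathbb{R}\}$ es un subanillo de $\mathbb{A}$ isomorfo a $\mathbb{R}$, que es exactamente la afirmaci'on del enunciado.

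Primero comprobar'ia que $\mathbb{R}'$ es un subanillo de $\mathbb{A}$ aplicando el criterio de caracterizaci'on de subanillos enunciado antes: $\mathbb{R}'$ es no vac'io pues contiene a $(0,0)$, y para $(a,0),(b,0)\in\mathbb{R}'$ se tiene $(a,0)-(b,0)=(a-b,0)\in\mathbb{R}'$ y $(a,0)(b,0)=(ab,\,a\cdot 0+0\cdot b)=(ab,0)\in\mathbb{R}'$, de modo que $\mathbb{R}'$ es cerrado para la resta y el producto. A continuaci'on verificar'ia que $\iota$ es compatible con las operaciones y biyectiva sobre su imagen: $\iota(a+b)=(a+b,0)=(a,0)+(b,0)=\iota(a)+\iota(b)$, $\iota(ab)=(ab,0)=(a,0)(b,0)=\iota(a)\iota(b)$, y adem'as $\iota(1)=(1,0)$, que es la identidad multiplicativa de $\mathbb{A}$. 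La inyectividad es inmediata: si $\iota(a)=\iota(b)$ entonces $(a,0)=(b,0)$, luego $a=b$. Por tanto $\iota$ es un isomorfismo de anillos entre $\mathbb{R}$ y $\mathbb{R}'$, lo que prueba que $\mathbb{A}$ contiene a $\mathbb{R}$ como estructura de anillo.

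No hay realmente un paso dif'icil: todas las verificaciones son c'alculos directos a partir de la definici'on de la suma y el producto en $\mathbb{A}$. El 'unico punto que conviene cuidar es observar que la segunda componente del producto $(a,0)(b,0)$ se anula precisamente porque ambos factores tienen segunda componente nula; este es el hecho que garantiza que $\mathbb{R}'$ sea cerrado para la multiplicaci'on y el que, en lo que sigue, permitir'a identificar los n'umeros reales con los adoubles de derivada cero y simplificar la notaci'on de las funciones de variable adouble.
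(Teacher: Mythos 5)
Tu propuesta es correcta y sigue esencialmente el mismo camino que la demostraci'on del trabajo: identificar el subconjunto $\{(a,0):a\in\mathbb{R}\}$, verificar que es un subanillo de $\mathbb{A}$ mediante el criterio de cierre bajo resta y producto, y exhibir el isomorfismo evidente con $\mathbb{R}$. Tu versi'on s'olo a~nade detalle expl'icito (compatibilidad de $\iota$ con las operaciones e inyectividad) que el texto deja impl'icito.
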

 \begin{proof}
  Tomando el conjunto de adoubles $R=\{(a,0)\in\mathbb{A}:a\in\mathbb{R}\}$, se tiene que 
  \begin{displaymath}
	 (a_1,0)-(a_2,0)=(a_1-a_2,0)\in R,
  \end{displaymath}
  \begin{displaymath}
	 (a_1,0)(a_2,0)=(a_1a_2,0)\in R,
  \end{displaymath}
  por lo que $R$ constituye un subanillo de $\mathbb{A}$, que es isomorfo a $\mathbb{R}$ si 
  se toma la aplicación que a cada adouble $(a,0)$ le hace corresponder el número real $a$.
 \end{proof}
 
 Una vez demostrado que $\mathbb{R}$ es un subanillo de $\mathbb{A}$ se puede definir
 un espacio vectorial sobre $\mathbb{A}$.
 
 \begin{proposicion}
  Sea $\alpha\in\mathbb{R}$ y $(a,\dot{a})\in\mathbb{A}$, si se define el producto exterior $\alpha(a,\dot{a})$   
  como
  \begin{displaymath}
	 \alpha(a,\dot{a}) = (\alpha,0)(a,\dot{a}) = (\alpha a,\alpha\dot{a}),
  \end{displaymath}
  se tiene que para este producto exterior y la suma definida en $\mathbb{A}$, este espacio es un espacio  
  vectorial.
 \end{proposicion}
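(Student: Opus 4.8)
El plan es verificar los cinco axiomas de espacio vectorial uno a uno, aprovechando que casi todo el trabajo ya se ha hecho. En primer lugar, invocar'ia la proposici'on que establece que $\mathbb{A}$ es un anillo conmutativo y unitario: en particular $(\mathbb{A},+)$ es un grupo abeliano, de modo que el primer axioma (que $E$ sea un grupo abeliano para su operaci'on interna) se cumple de inmediato y no requiere argumento adicional. Basta entonces comprobar las cuatro propiedades restantes, que involucran el producto exterior: $1x=x$, $\lambda(\mu x)=(\lambda\mu)x$, $(\lambda+\mu)x=\lambda x+\mu x$ y $\lambda(x+y)=\lambda x+\lambda y$.

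La observaci'on clave es que, por definici'on, el producto exterior no es m'as que el producto interno del anillo $\mathbb{A}$ con el factor izquierdo restringido al subanillo $R=\{(\alpha,0):\alpha\in\mathbb{R}\}$ de la Proposici'on \ref{pro:contieneR}, y que el isomorfismo $R\cong\mathbb{R}$ identifica $(\alpha,0)+(\beta,0)$ con $(\alpha+\beta,0)$ y $(\alpha,0)(\beta,0)$ con $(\alpha\beta,0)$. Con esto:
\begin{itemize}
 \item $1(a,\dot{a})=(1,0)(a,\dot{a})=(a,\dot{a})$, pues $(1,0)$ es la identidad multiplicativa de $\mathbb{A}$.
 \item $\lambda(\mu(a,\dot{a}))=(\lambda,0)\big((\mu,0)(a,\dot{a})\big)=\big((\lambda,0)(\mu,0)\big)(a,\dot{a})=(\lambda\mu,0)(a,\dot{a})=(\lambda\mu)(a,\dot{a})$, por la asociatividad del producto del anillo.
 \item $(\lambda+\mu)(a,\dot{a})=\big((\lambda,0)+(\mu,0)\big)(a,\dot{a})=(\lambda,0)(a,\dot{a})+(\mu,0)(a,\dot{a})=\lambda(a,\dot{a})+\mu(a,\dot{a})$, por la distributividad del producto respecto a la suma a la derecha.
 \item $\lambda\big((a,\dot{a})+(b,\dot{b})\big)=(\lambda,0)\big((a,\dot{a})+(b,\dot{b})\big)=(\lambda,0)(a,\dot{a})+(\lambda,0)(b,\dot{b})=\lambda(a,\dot{a})+\lambda(b,\dot{b})$, por la distributividad a la izquierda.
\end{itemize}

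Alternativamente, las cuatro identidades pueden comprobarse de forma directa en coordenadas usando $\lambda(a,\dot{a})=(\lambda a,\lambda\dot{a})$, con lo cual cada una se reduce a la identidad correspondiente para n'umeros reales aplicada componente a componente. No anticipo ning'un obst'aculo real en esta demostraci'on; el 'unico punto que merece un comentario es confirmar que la operaci'on externa est'a bien definida, esto es, que $(\lambda a,\lambda\dot{a})$ es efectivamente un adouble (lo cual es inmediato), y hacer expl'icita la identificaci'on inofensiva de $(\alpha,0)\in\mathbb{A}$ con $\alpha\in\mathbb{R}$ que subyace a escribir $\lambda x$ en lugar de $(\lambda,0)x$.
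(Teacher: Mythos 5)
Tu demostraci'on es correcta, pero sigue una ruta genuinamente distinta de la del texto. La prueba del documento es una sola observaci'on: la suma y el producto externo definidos sobre $\mathbb{A}$ coinciden con las operaciones usuales de $\mathbb{R}^2$ como espacio vectorial, de modo que los axiomas se heredan componente a componente sin c'alculo alguno. T'u, en cambio, deduces los cuatro axiomas que involucran al producto exterior a partir de la estructura de anillo de $\mathbb{A}$: identificas $\lambda x$ con el producto interno $(\lambda,0)x$, donde $(\lambda,0)$ vive en el subanillo $R=\{(\alpha,0)\}$ de la Proposici'on \ref{pro:contieneR}, y entonces $1x=x$ sale de que $(1,0)$ es la identidad multiplicativa, $\lambda(\mu x)=(\lambda\mu)x$ de la asociatividad del producto, y las dos leyes distributivas del axioma de espacio vectorial de las del anillo. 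Tu argumento es m'as largo pero m'as estructural: funciona para cualquier anillo conmutativo unitario que contenga una copia de $\mathbb{R}$ como subanillo, y adem'as deja pr'acticamente demostrada la compatibilidad $(\alpha x)(\beta y)=(\alpha\beta)(xy)$ que el texto necesita en la proposici'on siguiente para concluir que $\mathbb{A}$ es un 'algebra. La alternativa que mencionas al final (verificar las identidades en coordenadas, reduci'endolas a las de los reales) es exactamente el argumento del texto, as'i que tu propuesta contiene ambas rutas.
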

 \begin{proof}
  Las operaciones de suma y producto externo coinciden con las usuales de $\mathbb{R}^2$ como espacio vectorial.
 \end{proof}
                    
 Como se tiene sobre $\mathbb{A}$ una estructura de anillo y de espacio vectorial se puede llegar a la
 siguiente proposici'on.
 
 \begin{proposicion}
  $\mathbb{A}$ es un álgebra conmutativa y unitaria.
 \end{proposicion}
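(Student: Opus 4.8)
El plan es apoyarse en las dos estructuras ya establecidas sobre $\mathbb{A}$: la de anillo conmutativo y unitario, y la de espacio vectorial sobre $\mathbb{R}$ con el producto exterior $\alpha(a,\dot a)=(\alpha a,\alpha\dot a)$. Puesto que la definici'on de 'algebra exige que $\mathbb{A}$ tenga simult'aneamente una estructura vectorial y una de anillo (ambas ya demostradas), lo 'unico que resta por verificar es el axioma de compatibilidad
\begin{displaymath}
  (\alpha x)(\beta y)=(\alpha\beta)(xy)
\end{displaymath}
para todo $\alpha,\beta\in\mathbb{R}$ y todo $x,y\in\mathbb{A}$.

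Primero reescribir'ia el producto exterior en la forma $\alpha(a,\dot a)=(\alpha,0)(a,\dot a)$, tal como se hizo en la proposici'on que introduce la estructura vectorial. Con esta observaci'on, para $x,y\in\mathbb{A}$ se tendr'ia
\begin{displaymath}
  (\alpha x)(\beta y)=\big((\alpha,0)x\big)\big((\beta,0)y\big)=(\alpha,0)(\beta,0)(xy)=(\alpha\beta,0)(xy)=(\alpha\beta)(xy),
\end{displaymath}
donde se usan la asociatividad y la conmutatividad del producto del anillo $\mathbb{A}$, junto con la igualdad $(\alpha,0)(\beta,0)=(\alpha\beta,0)$, que proviene de que $R=\{(a,0):a\in\mathbb{R}\}$ es un subanillo isomorfo a $\mathbb{R}$ (Proposici'on \ref{pro:contieneR}). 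Alternativamente, la misma identidad se comprueba por c'alculo directo tomando $x=(a,\dot a)$ e $y=(b,\dot b)$ y expandiendo ambos miembros con la definici'on de la multiplicaci'on de adoubles.

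Finalmente, la conmutatividad y la unitariedad del 'algebra no requieren trabajo adicional: coinciden con la conmutatividad y la unitariedad del anillo $\mathbb{A}$ ya establecidas, siendo $(1,0)$ la identidad multiplicativa. No anticipo ning'un obst'aculo real en esta demostraci'on; el 'unico punto a cuidar es invocar correctamente que $\mathbb{A}$ ya posee ambas estructuras y que el producto exterior es la restricci'on del producto del anillo al subanillo $R\cong\mathbb{R}$, de modo que la compatibilidad se reduce a propiedades ya verificadas.
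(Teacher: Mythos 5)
Tu propuesta es correcta y sigue esencialmente el mismo camino que la demostraci\'on del texto: reducir todo al axioma de compatibilidad $(\alpha\hat{x})(\beta\hat{y})=(\alpha\beta)\hat{x}\hat{y}$ y verificarlo mediante las operaciones ya definidas, apoy\'andose en que las estructuras de anillo y de espacio vectorial ya est\'an establecidas. El detalle adicional que aportas ---ver el producto exterior como multiplicaci\'on por $(\alpha,0)$ en el subanillo $R\cong\mathbb{R}$ y usar asociatividad y conmutatividad--- es una forma limpia de organizar ese c\'alculo directo, pero no cambia el argumento.
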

 \begin{proof}
   Para realizar su demostraci'on s'olo basta
   demostrar que
   $$(\alpha\hat{x})(\beta\hat{y})=(\alpha\beta)\hat{x}\hat{y},$$ lo
   cual se obtiene de manera directa realizando las operaciones.
 \end{proof}

 En la siguiente subsecci'on se muestra un cambio de notaci'on que
 resulta de utilidad para realizar c'alculos de multiplicaci'on entre
 los adoubles, lo cual facilita la definici'on de funciones y el
 cálculo del cociente entre adoubles.

 \subsection{Unidad dual}

 El hecho de que $\mathbb{A}$ sea un 'algebra y que contenga a los reales permite hacer un cambio de
 notaci'on. Si se denota $\epsilon=(0,1)$, entonces si $\dot{a}\in\mathbb{R}$ se tiene que $(0,\dot{a})=\epsilon\dot{a}$, 
 por lo que cualquier elemento $(a,\dot{a})\in\mathbb{A}$ se puede expresar como 
 \begin{equation}
  (a,\dot{a})=(a,0)+(0,\dot{a})=(a,0)+\epsilon\dot{a}.	\label{nuevanotacion}
 \end{equation}
 Como el conjunto $R=\{(a,0)\in\mathbb{A}:a\in\mathbb{R}\}$ es isomorfo a $\mathbb{R}$, entonces se puede
 reescribir (\ref{nuevanotacion}) como 
 \begin{displaymath}
  (a,\dot{a})=a+\epsilon\dot{a}.
 \end{displaymath}

 Es usual llamar a $\epsilon$ unidad dual. La suma y la multiplicaci'on de dos adoubles usando la unidad dual resulta
 \begin{displaymath}
  (a+\epsilon \dot{a})+(b+\epsilon \dot{b})=a+b+\epsilon a+\epsilon b,
 \end{displaymath}
 \begin{displaymath}
	(a+\epsilon \dot{a})(b+\epsilon \dot{b})=ab+\epsilon a\dot{b}+\epsilon \dot{a}b+\epsilon^2\dot{a}\dot{b},
 \end{displaymath}
 pero como $\mathbb{A}$ es un 'algebra se puede sacar factor com'un $\epsilon$ y se obtiene
 \begin{displaymath}
  (a+\epsilon \dot{a})+(b+\epsilon \dot{b})=a+b+\epsilon(a+b),
 \end{displaymath}
 \begin{displaymath}
	(a+\epsilon \dot{a})(b+\epsilon \dot{b})=ab+\epsilon( a\dot{b}+\dot{a}b)+\epsilon^2\dot{a}\dot{b}.
 \end{displaymath}
 
 Para obtener el resultado final de la multiplicaci'on es necesario
 saber cómo opera la multiplicaci'on de $\epsilon$ consigo mismo.  La
 siguiente proposici'on muestra este resultado.

 \begin{proposicion} \label{pro:propiedadepsilon}
  $\epsilon^n=0, \forall n\geq 2$.
 \end{proposicion}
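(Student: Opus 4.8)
El plan es un c'alculo directo: primero evaluar $\epsilon^2$ usando la definici'on del producto en $\mathbb{A}$ y luego propagar ese resultado a toda potencia mayor apoy'andose en la estructura de anillo de $\mathbb{A}$, ya demostrada.

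Primero calcular'ia $\epsilon^2$. Como $\epsilon=(0,1)$, la f'ormula $(a,\dot{a})(b,\dot{b})=(ab,a\dot{b}+\dot{a}b)$ aplicada a $\epsilon$ consigo mismo da
\begin{displaymath}
 \epsilon^2=(0,1)(0,1)=(0\cdot 0,\ 0\cdot 1+1\cdot 0)=(0,0).
\end{displaymath}
Puesto que $(0,0)$ es el neutro aditivo de $\mathbb{A}$, esto establece $\epsilon^2=0$, que es el caso $n=2$.

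Despu'es tratar'ia el caso $n\geq 3$. Como el producto de $\mathbb{A}$ es asociativo, se tiene $\epsilon^n=\epsilon^2\cdot\epsilon^{\,n-2}$, y en cualquier anillo el producto del cero por cualquier elemento es el cero, de donde $\epsilon^n=0\cdot\epsilon^{\,n-2}=0$. Equivalentemente, se puede argumentar por inducci'on sobre $n$: el caso base es $\epsilon^2=0$ y el paso inductivo es $\epsilon^{n+1}=\epsilon\cdot\epsilon^n=\epsilon\cdot 0=0$.

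No preveo ning'un obst'aculo sustancial: la prueba se reduce a una multiplicaci'on de pares ordenados y a invocar la absorci'on del cero. El 'unico detalle que conviene no descuidar es justificar el paso de $n=2$ al caso general mediante la estructura de anillo de $\mathbb{A}$, en lugar de repetir el c'alculo componente a componente para cada $n$.
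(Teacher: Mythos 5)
Tu propuesta es correcta y sigue esencialmente el mismo camino que la demostraci'on del trabajo: calcular $\epsilon^2=(0,1)(0,1)=(0,0)$ con la f'ormula del producto en $\mathbb{A}$ y concluir de ah'i que $\epsilon^n=0$ para todo $n\geq 2$. S'olo eres algo m'as expl'icito al justificar el paso de $n=2$ al caso general mediante la asociatividad y la absorci'on del cero, detalle que el texto deja impl'icito.
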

 \begin{proof}
  Como $\epsilon^2=(0,1)(0,1)=(0,0)=0$, entonces $$\epsilon^n=0, \forall n\geq 2$$.
 \end{proof}

 Con esta proposici'on se obtiene que la multiplicaci'on es
 \begin{displaymath}
	(a+\epsilon \dot{a})(b+\epsilon \dot{b})=ab+\epsilon( a\dot{b}+\dot{a}b).
 \end{displaymath}

 Como se puede notar en la suma y la multiplicaci'on usando la unidad dual se obtiene en la primera
 componente el resultado de la operaci'on y en la unidad dual la derivada de la operaci'on.
 
 \subsection{Definici'on de funciones}
  
 En esta subsecci'on se definen funciones sobre $\mathbb{A}$ a partir funciones reales. 
 Estas funciones tienen la caracter'istica que en la primera componente del resultado 
 de su evaluaci'on se obtiene el valor de la funci'on y en la segunda, la derivada. 
 Adem'as, estas funciones tendr'an la propiedad de ser extensiones de las funciones reales, en el 
 sentido de que si se eval'uan en un real se obtiene un valor real. 
 La siguiente proposici'on brinda una extensi'on de funciones reales con la propiedad que se desea.
 
 \begin{proposicion}
  Sea $f(x)$ una función real de variable real que puede ser representada a trav'es de su serie de 
  Taylor: \ $\displaystyle f(x)=\sum_{n=0}^{\infty}\frac{c_nx^n}{n!}$. Si $x=a+\epsilon\dot{a}$, entonces se
  cumple que: 
  \begin{displaymath}
	 f(a+\epsilon \dot{a}) = f(a)+\epsilon\dot{a}f'(a).
  \end{displaymath}
 \end{proposicion}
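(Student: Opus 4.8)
El plan es sustituir directamente $x=a+\epsilon\dot{a}$ en la serie de Taylor de $f$ y simplificar la expresi'on resultante usando las propiedades algebraicas de $\mathbb{A}$ ya establecidas. Primero escribir'ia
\begin{displaymath}
 f(a+\epsilon\dot{a})=\sum_{n=0}^{\infty}\frac{c_n(a+\epsilon\dot{a})^n}{n!},
\end{displaymath}
y el paso clave es calcular la potencia $(a+\epsilon\dot{a})^n$. Como $\mathbb{A}$ es un anillo conmutativo, se puede aplicar el Teorema \ref{teo:binomio} (el binomio de Newton corresponde al caso $m=2$) para obtener
\begin{displaymath}
 (a+\epsilon\dot{a})^n=\sum_{k=0}^{n}\binom{n}{k}a^{n-k}\dot{a}^{k}\epsilon^{k}=a^n+na^{n-1}\epsilon\dot{a}+\sum_{k=2}^{n}\binom{n}{k}a^{n-k}\dot{a}^{k}\epsilon^{k}.
\end{displaymath}
Por la Proposici'on \ref{pro:propiedadepsilon} se tiene $\epsilon^k=0$ para todo $k\geq 2$, de modo que la suma restante se anula por completo y queda $(a+\epsilon\dot{a})^n=a^n+na^{n-1}\epsilon\dot{a}$.

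El segundo paso es sustituir esta expresi'on en la serie y separarla en dos sumas:
\begin{displaymath}
 f(a+\epsilon\dot{a})=\sum_{n=0}^{\infty}\frac{c_na^n}{n!}+\epsilon\dot{a}\sum_{n=1}^{\infty}\frac{c_nna^{n-1}}{n!}.
\end{displaymath}
La primera suma es, por definici'on, $f(a)$. En la segunda simplifico $n/n!=1/(n-1)!$ y reindexo con $m=n-1$ para reconocer $\sum_{m=0}^{\infty}c_{m+1}a^m/m!$, que es precisamente la serie de Taylor de $f'$ evaluada en $a$, es decir, $f'(a)$. Con esto se concluye que $f(a+\epsilon\dot{a})=f(a)+\epsilon\dot{a}f'(a)$.

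El punto m'as delicado de este argumento es la manipulaci'on de la serie infinita: hay que justificar que la serie de elementos de $\mathbb{A}$ converge, que es l'icito desarrollar cada potencia t'ermino a t'ermino y anular los sumandos con $\epsilon^k$ para $k\geq 2$, y que la suma puede luego partirse en dos series convergentes. Como $\mathbb{A}$ se identifica con $\mathbb{R}^2$ dotado de la topolog'ia producto, la convergencia puede entenderse componente a componente, y la hip'otesis de que $f$ admite representaci'on por su serie de Taylor garantiza la convergencia de $\sum c_na^n/n!$ y, dentro de su radio de convergencia, la de la serie que representa a $f'$. Esbozar'ia brevemente este argumento de convergencia por componentes para que la separaci'on de la serie quede plenamente justificada; el resto del c'alculo es rutinario.
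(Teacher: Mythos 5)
Tu demostraci'on sigue esencialmente el mismo camino que la del texto: aplicar el Teorema \ref{teo:binomio} para desarrollar $(a+\epsilon\dot{a})^n$, anular los t'erminos con $\epsilon^k$, $k\geq 2$, mediante la Proposici'on \ref{pro:propiedadepsilon}, y separar la serie resultante en $f(a)+\epsilon\dot{a}f'(a)$. La 'unica diferencia es que t'u explicitas la cuesti'on de convergencia componente a componente, que el texto omite; es un refinamiento v'alido pero no cambia el argumento.
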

 \begin{proof}
  Como la fórmula del binomio
  \begin{displaymath}
   (x+\epsilon \dot{x})^n=\sum_{i=0}^n\binom{n}{i}x^{n-i}(\epsilon \dot{x})^i	
  \end{displaymath}
  se cumple para los anillos conmutativos (Teorema \ref{teo:binomio}), 
  entonces se tiene que:
  \begin{displaymath}
	 (x+\epsilon \dot{x})^n=\sum_{i=0}^n\binom{n}{i}x^{n-i}(\epsilon \dot{x})^i.
  \end{displaymath}
  Como $\epsilon^p=0\ \ \forall p>1$ (Proposici'on \ref{pro:propiedadepsilon}), entonces
  \begin{displaymath}
   (a+\epsilon \dot{a})^n=a^n+\epsilon pa^{n-1}\dot{a}	
  \end{displaymath}
  y
  \begin{displaymath}
   f(a+\epsilon \dot{a})=\sum_{n=0}^{\infty}\frac{c_n(a+\epsilon \dot{a})^n}{n!}=
   \sum_{n=0}^{\infty}\frac{c_na^n}{n!}+\epsilon \dot{a}  \sum_{n=1}^{\infty}\frac{c_na^{n-1}}{(n-1)!}=
   f(a)+\epsilon \dot{a}f'(a).	
  \end{displaymath}
 \end{proof}
   
 La funci'on $f(a+\epsilon \dot{a})=f(a)+\epsilon \dot{a}f'(a)$ es una
 extensi'on de los reales: si se toma $b=0$ se obtiene que
 $f(a)=f(a)$. En otras palabras, si se restringe la funci'on a los
 reales, se obtiene un real.
 
 Por otra parte, una funci'on $f$ definida de esta manera cumple la propiedad que se desea: en la primera
 componente se obtiene la evaluaci'on de $f$ y en la segunda se tiene su derivada al aplicar la regla de 
 la cadena.

 De manera similar a como se defini'o para el caso de una variable, una función de $n$ variables adoubles 
 se define como
 \begin{displaymath}
  f(x_1,\ldots,x_n)=f(a_1,\ldots,a_n) + \epsilon(\dot{a}_1f'_{x_1}(a_1,\ldots,a_n)+
  \cdots+\dot{a}_nf'_{x_n}(a_1,\ldots,a_n)),	
 \end{displaymath}
 donde $x_i=a_i+\epsilon \dot{a}_i$ y $f'_{x_i}(a_1,\ldots,a_n)$ es la
 derivada parcial de $f$ con respecto a la variable $x_i$, evaluada en
 $(a_1,\ldots,a_n)$. Al igual que en el caso de una variable, si
 $\dot{a}_i = 1$ y $\dot{a}_j = 0\ \forall j\neq i$, entonces
 \begin{displaymath}
  f(x_1,\ldots,x_n)=f(a_1,\ldots,a_n) + \epsilon f'_{x_i}(a_1,\ldots,a_n),	
 \end{displaymath}
 y en la segunda componente se obtiene la derivada parcial de $f$ respecto a $x_i$.
 
 En este cap'itulo se ha realizado una representaci'on algebraica de
 los n'umeros adoubles, que son el fundamento de la diferenciaci'on
 autom'atica.  Utilizando resultados del 'algebra se demostr'o que
 utilizando n'umeros adoubles se pueden obtener las derivadas de
 cualquier funci'on an'alitica.  En el siguiente cap'itulo se
 presentan los n'umeros superadoubles desde una perspectiva
 algebraica, y se demuestra que es posible calcular derivadas anidadas
 utilizando la metodolog'ia propuesta en el Cap'itulo
 \ref{cha:metodologia}.
 
\chapter{Espacio SuperAdouble}\label{cha:superadouble}

En este cap'itulo se demuestra el Teorema \ref{teo:teorema}, que
permite utilizar los números superadoubles para calcular derivadas
anidadas. La demostraci'on se realiza por inducci'on en las
operaciones involucradas en el c'alculo de la funci'on anidada.

Al igual que en el Cap'itulo \ref{cha:adouble}, se demuestra que el
espacio formado por los números superadoubles es un anillo, un espacio
vectorial y un 'algebra; y se definen funciones de variable
superadouble. En todos los casos se demuestra que con las operaciones
definidas se obtienen las derivadas anidadas.
 
La estructura de este cap'itulo es la siguiente: en la Secci'on
\ref{sec:casobase} se presenta el caso base de la inducci'on que
permitir'a demostrar el Teorema \ref{teo:teorema}; en la Secci'on
\ref{sec:introSA} se definir'an representaciones algebraicas de la
clase SuperAdouble y se demostrar'a el Teorema \ref{teo:teorema} para
las operaciones elementales. Por 'ultimo, en la Secci'on
\ref{sec:funcionesSA} se definir'an funciones de variable SuperAdouble
con las que se pueden calcular derivadas anidadas.
 
 \section{Preliminares y caso base} \label{sec:casobase}
 
 En esta secci'on se presenta la demostraci'on que para un caso base
 se cumple que los SuperAdouble definidos en el Cap'itulo
 \ref{cha:metodologia} permiten calcular derivadas anidadas.
 
 La clase SuperAdouble tiene dos campos: valor y derivada, y cada uno
 de estos campos tienen a su vez dos campos, uno valor y uno
 derivada. Por lo tanto, una variable de tipo SuperAdouble tiene
 cuatro campos: valor.valor, valor.  derivada, derivada.valor,
 derivada.derivada, por lo que se pueden respresentar como pares
 ordenados de adoubles o como elementos de $R^4$.
 
 El hecho de utilizar los SuperAdouble para evaluar y derivar
 funciones del tipo $f(x)=\dot{g}(h(x))$ implica que existe una
 funci'on anidada, por lo que en ese contexto también existe una
 derivada original, una derivada anidada y una derivada compuesta. La
 correspondencia existente entre estas derivadas y los campos de las
 variables de tipo SuperAdouble es la siguiente, de acuerdo con el
 Teorema \ref{teo:teorema}: el campo valor.valor de un superadouble es
 el valor de la funci'on anidada; el campo valor.derivada, es la
 derivada original; el campo derivada.valor es la derivada anidada, y
 en el campo derivada.derivada se obtiene la derivada compuesta.
 
 La demostraci'on que en estos campos se obtienen estas derivadas se
 hará por inducci'on. Como caso base se tomar'a un superadouble $X$
 que sea el resultado de una operación \emph{push}, pues esta
 es la primera operación que se hace con un SuperAdouble.

 En el teorema siguiente se demuestra que para una variable $X$ que
 sea el resultado de una operación \emph{push} se cumple el teorema
 \ref{teo:teorema}. Sin embargo, no tiene sentido hablar de derivada
 anidada y derivada compuesta si no existe una función anidada, y este
 es precisamente el caso de un superadouble que sea el resultado de un
 \emph{push}. Para que tenga sentido hablar de derivada anidada y
 derivada compuesta en el superadouble $X=push(x)$, en el teorema se
 considerará que la función anidada $g(x)$ es la función identidad.
 
 \begin{teorema} \label{teo:casobase} Sea $f(x)=\dot{g}(h(x))$ una
   funci'on donde se conocen $h(x)$ y $g(y)=y$. Si $(a,a')$ es el par
   de n'umeros reales que representan el valor y la derivada de
   $\mathbf{a} = h(x_0)$, entonces al aplicar $X =
   push((a,a'))=((a,a'),(1,0))$, y $Y=g(X)$, se obtiene que:
  \begin{itemize}
 	 \item $Y$.valor.valor es $g(h(x_0))$. 
	 \item $Y$.valor.derivada es $g'(h(x_0))$.
	 \item $Y$.derivada.valor es $\dot{g}(h(x_0))$.
	 \item $Y$.derivada.derivada es $\dot{g}'(h(x_0))$.
  \end{itemize}
 \end{teorema}
 \begin{proof}
  Por las hip'otesis se tiene que $h(x_0)=a$ y $h'(x_0)=a'$, por lo que 
  $g(h(x_0))=a$ y $g'(h(x_0))=\frac{dg(h(x))}{dx}=h'(x_0)=a'$. De esta
  forma se tiene los dos primeros puntos de la tesis.
  
  Como $\dot{g}(x)\equiv1$, entonces $\dot{g}(h(x_0))=1$, por lo que
  se cumple el tercer punto y como la derivada de una constante es
  cero, entonces se cumple el 'ultimo punto porque $\dot{g}'(x)\equiv
  0$.
 \end{proof}

 Este caso base, se utilizar'a como hip'otesis de inducci'on para
 demostrar que las operaciones elementales y las funciones sobre clase
 SuperAdouble calcula la derivada anidada y compuesta.

 En la pr'oxima secci'on se definen las operaciones elementales con
 sus respectivas demostraciones de que calculan las derivadas
 anidadas.
 
 \section{Introducción a $\mathbb{SA}$} \label{sec:introSA}
 
 En la secci'on anterior se demostr'o el Teorema
 \ref{teo:casobase}. Utilizando este teorema como caso base, en esta
 secci'on se aplicar'a inducci'on en las operaciones elementales que
 se realizan en una funci'on para demostrar el Teorema
 \ref{teo:teorema}.
 
 La siguiente definición se usar'a para realizar una representaci'on
 algebraica de la clase SuperAdouble usando la notaci'on introducida
 en la secci'on anterior.
 
 \begin{definicion} \label{def:superadouble1}
  El espacio SuperAdouble $\mathbb{SA}$ es el conjunto de pares ordenados $(\hat{x},\hat{y})$ donde
  $\hat{x}$ y $\hat{y}$ son adoubles y la suma y multiplicación se definen de la siguiente
  forma:
  \begin{displaymath}
   (\hat{a},\hat{b})+(\hat{c},\hat{d})=(\hat{a}+\hat{c},\hat{b}+\hat{d}),	
  \end{displaymath}
  \begin{displaymath}
   (\hat{a},\hat{b})(\hat{c},\hat{d})=(\hat{a}\hat{c},\hat{a}\hat{d}+\hat{b}\hat{c}).	
  \end{displaymath}
 \end{definicion}
 
 Estas operaciones definidas sobre $\mathbb{SA}$ se realizan
 multiplicando y sumando elementos del espacio $\mathbb{A}$, por
 ejemplo, la primera componente de la suma es $\hat{a}+\hat{c}$, por
 tanto, si
 \begin{displaymath}
  \hat{a} = (a,\dot{a}),
 \end{displaymath}
 \begin{displaymath}
  \hat{c} = (c,\dot{c}),	
 \end{displaymath}
 entonces la primera componente de la suma es el adouble
 \begin{displaymath}
  \hat{a}+\hat{c} = (a + c, \dot{a} + \dot{c}).	
 \end{displaymath}
 
 A los elementos del espacio SuperAdouble se les llamar'a n'umeros superadoubles
 o simplemente superadoubles.
 
 Lo que se quiere probar es que los superadoubles calculan derivadas
 anidadas en su tercera y cuarta componente. Para esto resulta más
 conveniente considerar los superadoubles como elementos de
 $\mathbb{R}^4$, por lo que resultaría de mucha utilidad encontrar un
 isomorfismo entre $\mathbb{SA}$ y un subconjunto de
 $\mathbb{R}^4$. Para tener un isomorfismo con las operaciones
 definidas es necesario tener alguna estructura algebraica como la
 proposici'on siguiente.
 
 \begin{proposicion}
   \label{prop:sa-anillo-conmutativo-unitario}
  $\mathbb{SA}$ es un anillo conmutativo y unitario. El neutro con respecto a la 
  suma es el elemento $\mathbf{0} = ((0,0),(0,0))$ y el neutro respecto al producto es 
  $\mathbf{1} = ((1,0),(0,0))$. 
 \end{proposicion}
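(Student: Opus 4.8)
El plan es verificar directamente cada uno de los axiomas de anillo conmutativo y unitario (Definición \ref{def:anillo}) para $\mathbb{SA}$ con las operaciones dadas en la Definición \ref{def:superadouble1}, apoyándome en el hecho, ya establecido, de que $\mathbb{A}$ es a su vez un anillo conmutativo y unitario. La observación clave es que las fórmulas para la suma y el producto de superadoubles tienen exactamente la misma forma que las de los adoubles, pero con los reales reemplazados por adoubles; de modo que cada propiedad de $\mathbb{SA}$ se reduce formalmente a la propiedad correspondiente de $\mathbb{A}$. Así, la estrategia es "levantar" la demostración de que $\mathbb{A}$ es un anillo, usando que $\mathbb{A}$ ya posee todas las propiedades que antes aportaban los reales.

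Primero probaría que $(\mathbb{SA},+)$ es un grupo abeliano: la asociatividad y la conmutatividad de $+$ se siguen componente a componente de la asociatividad y conmutatividad de la suma en $\mathbb{A}$; el neutro es $\mathbf{0}=((0,0),(0,0))$ porque $(0,0)$ es el cero de $\mathbb{A}$; y el opuesto de $(\hat{a},\hat{b})$ es $(-\hat{a},-\hat{b})$, donde $-\hat{a}$ denota el opuesto en $\mathbb{A}$. Después verificaría la asociatividad del producto: desarrollando $((\hat{a},\hat{b})(\hat{c},\hat{d}))(\hat{e},\hat{f})$ y $(\hat{a},\hat{b})((\hat{c},\hat{d})(\hat{e},\hat{f}))$ con la regla $(\hat{a},\hat{b})(\hat{c},\hat{d})=(\hat{a}\hat{c},\hat{a}\hat{d}+\hat{b}\hat{c})$ se obtienen, en ambos casos, la primera componente $\hat{a}\hat{c}\hat{e}$ y la segunda $\hat{a}\hat{c}\hat{f}+\hat{a}\hat{d}\hat{e}+\hat{b}\hat{c}\hat{e}$, usando asociatividad y distributividad en $\mathbb{A}$. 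Análogamente, la distributividad a izquierda y a derecha del producto respecto de la suma se comprueba expandiendo ambos lados y usando la distributividad en $\mathbb{A}$.

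Para la conmutatividad del producto basta notar que, como $\mathbb{A}$ es conmutativo, $\hat{a}\hat{c}=\hat{c}\hat{a}$ y $\hat{a}\hat{d}+\hat{b}\hat{c}=\hat{c}\hat{b}+\hat{d}\hat{a}$, de donde $(\hat{a},\hat{b})(\hat{c},\hat{d})=(\hat{c},\hat{d})(\hat{a},\hat{b})$. Finalmente, para la existencia de identidad multiplicativa se calcula
\begin{displaymath}
  (\hat{a},\hat{b})((1,0),(0,0)) = (\hat{a}\cdot(1,0),\ \hat{a}\cdot(0,0)+\hat{b}\cdot(1,0)) = (\hat{a},\hat{b}),
\end{displaymath}
usando que $(1,0)$ es la identidad de $\mathbb{A}$ y $(0,0)$ su cero; por conmutatividad del producto, $\mathbf{1}=((1,0),(0,0))$ es también neutro por la izquierda. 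No anticipo ningún obstáculo serio: la demostración es esencialmente rutinaria. El único punto donde hay que ser cuidadoso es en no confundir las operaciones de los dos niveles —suma y producto \emph{en} $\mathbb{A}$ versus suma y producto \emph{de pares} en $\mathbb{SA}$— y en invocar explícitamente que $\mathbb{A}$ ya es anillo conmutativo unitario cada vez que se usa una de esas propiedades, que es justamente lo que permite que el argumento se reduzca al caso ya probado en el Capítulo \ref{cha:adouble}.
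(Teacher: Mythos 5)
Tu propuesta es correcta y sigue esencialmente el mismo camino que la demostración del trabajo (incluida en los anexos): una verificación directa, componente a componente, de todos los axiomas de anillo conmutativo unitario, reduciendo cada propiedad de $\mathbb{SA}$ a la propiedad correspondiente del anillo $\mathbb{A}$; tus cálculos de asociatividad del producto y del neutro multiplicativo coinciden con los del texto. La única diferencia es de presentación: tú haces explícita la observación de que las fórmulas de $\mathbb{SA}$ sobre $\mathbb{A}$ replican las de $\mathbb{A}$ sobre $\mathbb{R}$, lo cual el trabajo usa implícitamente pero no enuncia.
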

 \begin{proof}
  La demostraci'on es inmediata a partir de la definici'on del espacio, el lector interesado puede
  leerla en los anexos. 
 \end{proof}
 
 Si se tiene dos elementos de $\mathbb{SA}$ entonces con la proposici'on anterior la resta se obtiene 
 sumando por el opuesto del segundo elemento:
 \begin{displaymath}
  (\hat{a},\hat{b})-(\hat{c},\hat{d})=(\hat{a},\hat{b})+(-\hat{c},-\hat{d}) = (\hat{a}-\hat{c},\hat{b}-\hat{d}).
 \end{displaymath}
 
 A continuaci'on se presenta otra definici'on del espacio $\mathbb{SA}$ en el sentido de isomorfismo.
 Esta nueva definici'on es importante porque se realiza directamente sobre los reales y no sobre $\mathbb{A}$,  
 por lo que facilita el trabajo de c'alculo cuando se definan la divisi'on de superadoubles y las funciones
 de variable superadouble.

 \begin{definicion} \label{def:superadouble2}
  El espacio SuperAdouble \ $\mathbb{SA}_{\mathbb{R}}$ es el conjunto de $\mathbb{R}^4$ con la suma y multiplicación 
  definidos por
  {\setlength\arraycolsep{2pt}
   \begin{eqnarray}
    \mathbf{x}+\mathbf{y} & = & (x+y,x'+y',\dot{x}+\dot{y},\dot{x}'+\dot{y}'), \label{eq:suma} \\
    \mathbf{x}\mathbf{y} & = & (xy, \nonumber \\
                         &   &  xy'+x'y,\nonumber \\
                         &   &  x\dot{y}+\dot{x}y, \nonumber \\
                         &   &  x\dot{y}'+x'\dot{y}+\dot{x}y'+\dot{x}'y), \label{eq:producto}
   \end{eqnarray}
  }donde $\mathbf{x}=(x,x',\dot{x},\dot{x}')$ y $\mathbf{y}=(y,y',\dot{y},\dot{y}')$.
 \end{definicion}
 
 \begin{proposicion} \label{prop:SAR-anillo-conmutativo}
  El espacio $\mathbb{SA}_{\mathbb{R}}$ es un anillo conmutativo.
 \end{proposicion}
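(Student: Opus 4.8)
El plan es verificar directamente que $\mathbb{SA}_{\mathbb{R}}$ satisface los axiomas de anillo conmutativo de la Definición \ref{def:anillo}. Sin embargo, en lugar de hacer todos los cálculos a mano, lo más elegante sería exhibir un isomorfismo entre $\mathbb{SA}$ (cuya estructura de anillo conmutativo ya está establecida en la Proposición \ref{prop:sa-anillo-conmutativo-unitario}) y $\mathbb{SA}_{\mathbb{R}}$, de modo que todas las propiedades se trasladen automáticamente. Primero definiría la aplicación $\varphi: \mathbb{SA} \to \mathbb{SA}_{\mathbb{R}}$ que a un superadouble $((x,x'),(\dot{x},\dot{x}'))$ le asocia la cuádrupla $(x,x',\dot{x},\dot{x}') \in \mathbb{R}^4$, que es evidentemente una biyección.

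En segundo lugar, verificaría que $\varphi$ respeta la suma: aplicando la definición de suma en $\mathbb{SA}$ (que se hereda de la suma componente a componente en $\mathbb{A}$, es decir, $(\hat{a},\hat{b})+(\hat{c},\hat{d})=(\hat{a}+\hat{c},\hat{b}+\hat{d})$ con $\hat{a}+\hat{c}=(a+c,\dot{a}+\dot{c})$), se obtiene exactamente la fórmula \eqref{eq:suma}. En tercer lugar, comprobaría que $\varphi$ respeta el producto: la multiplicación en $\mathbb{SA}$ es $(\hat{a},\hat{b})(\hat{c},\hat{d})=(\hat{a}\hat{c},\hat{a}\hat{d}+\hat{b}\hat{c})$, y al desarrollar los productos y sumas de adoubles usando $(a,\dot{a})(c,\dot{c})=(ac,a\dot{c}+\dot{a}c)$, la primera componente queda $xy$, la segunda $xy'+x'y$, la tercera $x\dot{y}+\dot{x}y$ y la cuarta $x\dot{y}'+x'\dot{y}+\dot{x}y'+\dot{x}'y$, que es precisamente \eqref{eq:producto}. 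Finalmente, $\varphi$ envía el cero $\mathbf{0}=((0,0),(0,0))$ de $\mathbb{SA}$ al elemento $(0,0,0,0)$ de $\mathbb{SA}_{\mathbb{R}}$.

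Concluiría observando que, al ser $\varphi$ un isomorfismo de anillos y $\mathbb{SA}$ un anillo conmutativo (Proposición \ref{prop:sa-anillo-conmutativo-unitario}), entonces $\mathbb{SA}_{\mathbb{R}}$ hereda la estructura de anillo conmutativo. Alternativamente, si se prefiere una prueba autocontenida sin invocar el isomorfismo, bastaría verificar directamente los axiomas: asociatividad y conmutatividad de la suma son inmediatas por ser componente a componente; el elemento neutro es $(0,0,0,0)$ y el opuesto de $\mathbf{x}$ es $-\mathbf{x}=(-x,-x',-\dot{x},-\dot{x}')$; la asociatividad y conmutatividad del producto, así como las dos distributividades, se obtienen expandiendo ambos miembros y comparando componente a componente.

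El principal obstáculo —aunque más tedioso que difícil— es la verificación de la asociatividad del producto directamente sobre las fórmulas de $\mathbb{R}^4$, ya que la cuarta componente involucra cuatro términos y al multiplicar por un tercer factor se generan doce sumandos que hay que reorganizar. Por eso la vía del isomorfismo con $\mathbb{SA}$ es claramente preferible: traslada todo el trabajo pesado a la Proposición \ref{prop:sa-anillo-conmutativo-unitario} ya demostrada, y deja esta proposición reducida a comprobar que las fórmulas \eqref{eq:suma} y \eqref{eq:producto} son simplemente la reescritura en coordenadas de las operaciones de la Definición \ref{def:superadouble1}.
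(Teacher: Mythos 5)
Tu propuesta es correcta, pero sigue una ruta genuinamente distinta de la del trabajo. La demostraci'on del documento (incluida en los anexos) es la verificaci'on directa sobre $\mathbb{R}^4$: la estructura de grupo abeliano aditivo se hereda de la suma usual de $\mathbb{R}^4$, y la asociatividad, la conmutatividad y la distributividad del producto se comprueban expandiendo expl'icitamente las f'ormulas \eqref{eq:producto} --- justamente el c'alculo de doce sumandos en la cuarta componente que t'u identificas como el obst'aculo principal. Tu v'ia preferida, el transporte de estructura mediante la biyecci'on $\varphi(((x,x'),(\dot{x},\dot{x}')))=(x,x',\dot{x},\dot{x}')$, es v'alida y m'as econ'omica: una vez comprobado que $\varphi$ es compatible con las operaciones (c'alculo que el trabajo realiza de todos modos en la proposici'on siguiente sobre el isomorfismo), todos los axiomas se trasladan desde la Proposici'on \ref{prop:sa-anillo-conmutativo-unitario}, y de hecho obtendr'ias gratis tambi'en que el anillo es unitario, cosa que el propio documento se~nala que deja para deducir del isomorfismo. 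El 'unico cuidado que exige tu ruta es de orden l'ogico y expositivo: no puedes llamar a $\varphi$ ``isomorfismo de anillos'' antes de saber que $\mathbb{SA}_{\mathbb{R}}$ es un anillo; debes formularlo como una biyecci'on compatible con las operaciones y deducir de ah'i los axiomas, lo cual implicar'ia adelantar el contenido de la proposici'on del isomorfismo que en el documento aparece despu'es de 'esta. Lo que compra cada enfoque: el del trabajo es autocontenido y no altera el orden de los resultados; el tuyo evita duplicar los c'alculos tediosos de asociatividad y distributividad en $\mathbb{R}^4$.
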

 \begin{proof}
  La demostraci'on se deduce directo de la definici'on de anillo, el lector interesado
  en la demostraci'on puede referirse a los anexos.
 \end{proof}
 
 A pesar de que $\mathbb{SA}_\mathbb{R}$ es un anillo unitario, no se
 demostr'o en la proposici'on anterior ya que con la siguiente
 proposici'on se obtiene como consecuencia del isomorfismo. Adem'as de
 esta propiedad, el isomorfismo entre $\mathbb{SA}$ y
 $\mathbb{SA}_\mathbb{R}$ brinda la posibilidad de usar cualquiera de
 estos dos para representar la clase SuperAdouble.
 
 \begin{proposicion} 
  Sea $\phi:\mathbb{SA}\longrightarrow\mathbb{SA}_\mathbb{R}$ la aplicaci'on
  \begin{displaymath}
	 \phi(((a,\dot{a}),(b,\dot{b}))) = (a,\dot{a},b,\dot{b}).
  \end{displaymath}
  Con esta aplicaci'on se tiene que $\mathbb{SA}$ es isomorfo a $\mathbb{SA}_\mathbb{R}$. 
 \end{proposicion}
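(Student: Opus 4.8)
El plan es comprobar que $\phi$ es una biyecci'on que respeta la suma y el producto, de modo que sea un isomorfismo de anillos; de ello se deducir'a adem'as, como se anticip'o antes del enunciado, que $\mathbb{SA}_\mathbb{R}$ es un anillo unitario. La biyectividad es inmediata: un elemento de $\mathbb{SA}$ no es m'as que un par de adoubles, es decir, un par de pares de reales, y $\phi$ simplemente ``aplana'' esa estructura; la aplicaci'on $(a,\dot{a},b,\dot{b})\mapsto((a,\dot{a}),(b,\dot{b}))$ es una inversa expl'icita de $\phi$, por lo que $\phi$ es inyectiva y sobreyectiva.

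Para la suma tomo $\hat{a}=(a,\dot{a})$, $\hat{b}=(b,\dot{b})$, $\hat{c}=(c,\dot{c})$, $\hat{d}=(d,\dot{d})$, calculo $(\hat{a},\hat{b})+(\hat{c},\hat{d})$ seg'un la Definici'on \ref{def:superadouble1}, aplico $\phi$ y comparo el resultado con $\phi((\hat{a},\hat{b}))+\phi((\hat{c},\hat{d}))$ mediante la f'ormula (\ref{eq:suma}) de la Definici'on \ref{def:superadouble2}: ambos lados dan $(a+c,\dot{a}+\dot{c},b+d,\dot{b}+\dot{d})$. Para el producto procedo de forma an'aloga: expando $(\hat{a},\hat{b})(\hat{c},\hat{d})=(\hat{a}\hat{c},\hat{a}\hat{d}+\hat{b}\hat{c})$ usando la multiplicaci'on de $\mathbb{A}$ en cada componente, con lo que $\hat{a}\hat{c}=(ac,a\dot{c}+\dot{a}c)$ y $\hat{a}\hat{d}+\hat{b}\hat{c}=(ad+bc,a\dot{d}+\dot{a}d+b\dot{c}+\dot{b}c)$; al aplicar $\phi$ obtengo la cu'adrupla $(ac,\,a\dot{c}+\dot{a}c,\,ad+bc,\,a\dot{d}+\dot{a}d+b\dot{c}+\dot{b}c)$. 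Por el otro lado, identificando $\phi((\hat{a},\hat{b}))$ con $(x,x',\dot{x},\dot{x}')$ y $\phi((\hat{c},\hat{d}))$ con $(y,y',\dot{y},\dot{y}')$, la f'ormula (\ref{eq:producto}) devuelve exactamente esas mismas cuatro componentes, de donde $\phi$ respeta el producto.

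Verificadas ambas igualdades, $\phi$ es un isomorfismo de anillos, lo que concluye la prueba; en particular $\mathbb{SA}_\mathbb{R}$ hereda de $\mathbb{SA}$ (Proposici'on \ref{prop:sa-anillo-conmutativo-unitario}) la estructura de anillo conmutativo y unitario, con identidad $\phi(\mathbf{1})=(1,0,0,0)$, lo que completa la Proposici'on \ref{prop:SAR-anillo-conmutativo}. No hay aqu'i dificultad conceptual; el 'unico punto que exige cuidado es manejar bien el choque de notaci'on entre los nombres de coordenadas ``con punto'' de $\mathbb{SA}_\mathbb{R}$ (meras etiquetas de las componentes de $\mathbb{R}^4$) y la estructura efectiva de adoubles de $\mathbb{SA}$, para que la correspondencia de 'indices al usar (\ref{eq:producto}) sea la correcta.
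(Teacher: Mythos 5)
Tu propuesta es correcta y sigue esencialmente el mismo camino que la demostraci'on del trabajo: dar por inmediata la biyectividad (t'u adem'as exhibes la inversa expl'icita) y verificar por c'alculo directo que $\phi$ respeta la suma y el producto, obteniendo en ambos lados las mismas cu'adruplas $(a+c,\dot{a}+\dot{c},b+d,\dot{b}+\dot{d})$ y $(ac,a\dot{c}+\dot{a}c,ad+bc,a\dot{d}+\dot{a}d+b\dot{c}+\dot{b}c)$. La observaci'on final sobre que $\mathbb{SA}_\mathbb{R}$ hereda la unidad v'ia el isomorfismo coincide con lo que el texto se\~nala justo antes de la proposici'on.
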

 \begin{proof}
  Sea
  \begin{displaymath}
   \hat{a}=(a,\dot{a}), \qquad \hat{b}=(b,\dot{b}), \qquad \hat{c}=(c,\dot{c}), \qquad \hat{d}=(d,\dot{d}).	
  \end{displaymath}
  
  Como la aplicaci'on $\phi$ es una biyecci'on, s'olo falta demostrar
  \begin{displaymath}
	 \phi((\hat{a},\hat{b})+(\hat{c},\hat{d})) = \phi((\hat{a},\hat{b}))+\phi((\hat{c},\hat{d})),
  \end{displaymath}
  \begin{displaymath}
	 \phi((\hat{a},\hat{b})(\hat{c},\hat{d})) = \phi((\hat{a},\hat{b}))\phi((\hat{c},\hat{d})).
	\end{displaymath}
	
	Calculando
	{\setlength\arraycolsep{2pt}
   \begin{eqnarray*}
    \phi((\hat{a},\hat{b})+(\hat{c},\hat{d})) & = & \phi(((a+c,\dot{a}+\dot{c}),(b+d,\dot{b}+\dot{d}))) \\
	                                            & = & (a+c,\dot{a}+\dot{c},b+d,\dot{b}+\dot{d}),
   \end{eqnarray*}}
  {\setlength\arraycolsep{2pt}
   \begin{eqnarray*}
    \phi((\hat{a},\hat{b}))+\phi((\hat{c},\hat{d})) & = & (a,\dot{a},b,\dot{b})+(c,\dot{c},d,\dot{d}) \\
	                                                  & = & (a+c,\dot{a}+\dot{c},b+d,\dot{b}+\dot{d}), 
   \end{eqnarray*}
   }se obtiene que 
  \begin{displaymath}
	 \phi((\hat{a},\hat{b})+(\hat{c},\hat{d})) = \phi((\hat{a},\hat{b}))+\phi((\hat{c},\hat{d})).
  \end{displaymath}
  
  Con la multiplicaci'on se opera an'alogo a la suma, pero primero se 
  necesita realizar los c'alculos
  \begin{displaymath}
   \hat{a}\hat{c}=(ac,a\dot{c}+\dot{a}c),	
  \end{displaymath}
  \begin{displaymath}
   \hat{a}\hat{d}=(ad,a\dot{d}+\dot{a}d),	
  \end{displaymath}
  \begin{displaymath}
   \hat{b}\hat{c}=(bc,b\dot{c}+\dot{b}c),	
  \end{displaymath}
  que intervienen en la siguiente operaci'on
  \begin{displaymath}
	 \hat{a}\hat{d}+\hat{b}\hat{c}=(ad+bc,a\dot{d}+\dot{a}d+b\dot{c}+\dot{b}c).
  \end{displaymath}
  Usando la expresi'on anterior se calcula
  \begin{displaymath}
	 \phi((\hat{a},\hat{b})(\hat{c},\hat{d})) = \phi((\hat{a}\hat{c},\hat{a}\hat{d}+\hat{b}\hat{c})) 
	                                          = (ac,a\dot{c}+\dot{a}c,ad+bc,a\dot{d}+\dot{a}d+b\dot{c}+\dot{b}c).
  \end{displaymath}
  Por otra parte
  {\setlength\arraycolsep{2pt}
   \begin{eqnarray*}
    \phi((\hat{a},\hat{b}))\phi((\hat{c},\hat{d})) & = & (a,\dot{a},b,\dot{b})(c,\dot{c},d,\dot{d}) \\
	                                                 & = & (ac,a\dot{c}+\dot{a}c,ad+bc,a\dot{d}+\dot{a}d+b\dot{c}+\dot{b}c). 
   \end{eqnarray*}
   }por lo que se obtiene
  \begin{displaymath}
	 \phi((\hat{a},\hat{b})(\hat{c},\hat{d})) = \phi((\hat{a},\hat{b}))\phi((\hat{c},\hat{d})).
  \end{displaymath}
 \end{proof}

 En lo que sigue se usar'a la notaci'on $\dot{(a+b)}, \dot{(a-b)}$ y $\dot{(ab)}$ para representar
 la derivada anidada de la operaci'on correspondiente. 
 
 El siguiente teorema demuestra que los superadoubles
 calculan derivadas anidadas en las tres operaciones definidas hasta el momento.
 
 \begin{teorema} \label{teo:operaciones} Sean $(a,a',\dot{a},\dot{a}')$
   y $(b,b',\dot{b},\dot{b}')$ dos superadoubles cuyas componentes
   representan el valor, la derivada original, la derivada anidada y
   la derivada compuesta de sus respectivas operaciones. Entonces se
   cumple
  \begin{displaymath}
	 (a,a',\dot{a},\dot{a}')+(b,b',\dot{b},\dot{b}')=(a+b,(a+b)',\dot{(a+b)},(\dot{(a+b)})'),
  \end{displaymath}
  \begin{displaymath}
	 (a,a',\dot{a},\dot{a}')-(b,b',\dot{b},\dot{b}')(a-b,(a-b)',\dot{(a-b)},(\dot{(a-b)})'),
  \end{displaymath}
  \begin{displaymath}
	 (a,a',\dot{a},\dot{a}')(b,b',\dot{b},\dot{b}') = (ab,(ab)',\dot{(ab)},(\dot{(ab)})').
  \end{displaymath}
 \end{teorema}
 \begin{proof}
  Sumando y restando se obtiene
  \begin{displaymath}
	 (a,a',\dot{a},\dot{a}')+(b,b',\dot{b},\dot{b}') = (a+b,a'+b',\dot{a}+\dot{b},\dot{a}'+\dot{b}'),
  \end{displaymath}
  \begin{displaymath}
	 (a,a',\dot{a},\dot{a}')-(b,b',\dot{b},\dot{b}') = (a-b,a'-b',\dot{a}-\dot{b},\dot{a}'-\dot{b}').
  \end{displaymath}
  Usando que la suma de la derivada es la derivada de la suma se obtiene
  {\setlength\arraycolsep{2pt}
   \begin{eqnarray*}
    (a+b,a'+b',\dot{a}+\dot{b},\dot{a}'+\dot{b}') & = & (a+b,(a+b)',\dot{(a+b)},(\dot{a}+\dot{b})') \\
	  									                            & = & (a+b,(a+b)',\dot{(a+b)},(\dot{(a+b)})'), 
   \end{eqnarray*}}
  {\setlength\arraycolsep{2pt}
   \begin{eqnarray*}
    (a-b,a'-b',\dot{a}-\dot{b},\dot{a}'-\dot{b}') & = & (a-b,(a-b)',\dot{(a-b)},(\dot{a}-\dot{b})') \\
	  									                            & = & (a-b,(a-b)',\dot{(a-b)},(\dot{(a-b)})'), 
   \end{eqnarray*}} 
  Para demostrar el tercer resultado de la tesis se aplica la regla del producto para
  las derivadas,
  {\setlength\arraycolsep{2pt}
   \begin{eqnarray*}
    (a,a',\dot{a},\dot{a}')(b,b',\dot{b},\dot{b}') & = & (ab,a'b+ab',\dot{a}b+a\dot{b},(a\dot{b}'+a'\dot{b})+(\dot{a}'b+\dot{a}b'))\\
                                                   & = & (ab,(ab)',\dot{(ab)},((a\dot{b})'+\dot{a}b)')\\
                                                   & = & (ab,(ab)',\dot{(ab)},(\dot{a}b+(a\dot{b})')\\
                                                   & = & (ab,(ab)',\dot{(ab)},(\dot{(ab)})').
   \end{eqnarray*}}
 \end{proof}
 
 La divisi'on entre dos elementos se calcula multiplicando uno por el inverso
 del otro, por lo tanto, para realizar la divisi'on primero se necesita el inverso.

 \begin{proposicion}
  El inverso de $\mathbf{a}=(a,a',\dot{a},\dot{a}')$ es
  $$\mathbf{a}^{-1}=
  \Big(\frac{1}{a},-\frac{a'}{a^2},-\frac{\dot{a}}{a^2},\frac{2a'\dot{a}-a\dot{a}'}{a^3}\Big).$$
 \end{proposicion}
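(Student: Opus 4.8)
The plan is to follow the same strategy used for the inverse of an adouble: instead of verifying the formula by direct multiplication, I would \emph{derive} it by solving the equation $\mathbf{a}\mathbf{x} = \mathbf{1}$ in $\mathbb{SA}_{\mathbb{R}}$, where $\mathbf{1} = (1,0,0,0)$ and $\mathbf{x} = (x,x',\dot{x},\dot{x}')$ has unknown real components. Expanding the left-hand side with the product (\ref{eq:producto}) of Definici\'on \ref{def:superadouble2} and matching components, this equation is equivalent to the system
\begin{displaymath}
\left\{\begin{array}{l}
ax = 1,\\
ax' + a'x = 0,\\
a\dot{x} + \dot{a}x = 0,\\
a\dot{x}' + a'\dot{x} + \dot{a}x' + \dot{a}'x = 0.
\end{array}\right.
\end{displaymath}

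First I would solve the first three equations in order; they are immediate and give $x = \frac{1}{a}$, $x' = -\frac{a'}{a^2}$ and $\dot{x} = -\frac{\dot{a}}{a^2}$. As in the adouble case, the first equation already shows that the construction requires $a \neq 0$, so a superadouble whose first component vanishes has no inverse.

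Then I would substitute these three values into the last equation and solve for $\dot{x}'$. This is the only step involving a genuine computation: one obtains $a\dot{x}' = -a'\dot{x} - \dot{a}x' - \dot{a}'x = \frac{a'\dot{a}}{a^2} + \frac{\dot{a}a'}{a^2} - \frac{\dot{a}'}{a} = \frac{2a'\dot{a}}{a^2} - \frac{\dot{a}'}{a}$, hence $\dot{x}' = \frac{2a'\dot{a} - a\dot{a}'}{a^3}$. Collecting the four components yields exactly the asserted expression for $\mathbf{a}^{-1}$. Finally, since $\mathbb{SA}_{\mathbb{R}}$ is a commutative ring (Proposici\'on \ref{prop:SAR-anillo-conmutativo}), the element found also satisfies $\mathbf{x}\mathbf{a} = \mathbf{1}$, so it is a two-sided inverse; and because it is the unique solution of a triangular linear system with nonzero pivots, it is the only one.

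I do not anticipate any real difficulty here: the argument is the natural analogue of the adouble computation, and the only care needed is the bookkeeping in the final substitution and stating explicitly that the result is valid precisely when $a \neq 0$. If desired, one could instead phrase the whole proof inside $\mathbb{SA}$ using the pair-of-adoubles multiplication of Definici\'on \ref{def:superadouble1} together with the already established formula for the inverse of an adouble, but working in $\mathbb{SA}_{\mathbb{R}}$ keeps the computation entirely over $\mathbb{R}$ and is more transparent.
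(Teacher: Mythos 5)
Tu propuesta es correcta y sigue esencialmente el mismo camino que la demostraci\'on del trabajo: plantear $\mathbf{a}\mathbf{x}=(1,0,0,0)$ en $\mathbb{SA}_{\mathbb{R}}$, reducirlo al mismo sistema triangular de cuatro ecuaciones y resolverlo componente a componente, obteniendo $\dot{x}'=\frac{2a'\dot{a}-a\dot{a}'}{a^3}$. Los comentarios adicionales sobre la condici\'on $a\neq 0$, la unicidad y el car\'acter bilateral del inverso son correctos y solo hacen expl\'icito lo que el texto deja impl\'icito.
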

 \begin{proof}
   El inverso del superadouble $\mathbf{a}$ es
   $\mathbf{a}^{-1}=(x,x',\dot{x}, \dot{x}')$ tal que 
  \begin{displaymath}
   (a,a',\dot{a},\dot{a}')(x,x',\dot{x},\dot{x}')=(1,0,0,0).	
  \end{displaymath}
  Hallar $\mathbf{a}^{-1}$ es equivalente a resolver el sistema de
  ecuaciones
  \begin{displaymath}
   \left\{\begin{array}{l}
          ax=1\\
          ax'+a'x=0\\
          a\dot{x}+\dot{a}x=0\\
          a\dot{x}'+a'\dot{x}+\dot{a}x'+\dot{a}'x=0
         \end{array}
   \right.	
  \end{displaymath}
  cuya solución es:
  \begin{displaymath}
   x=\frac{1}{a},	
  \end{displaymath}
  \begin{displaymath}
   x'=-\frac{a'}{a^2},	
  \end{displaymath}
  \begin{displaymath}
   \dot{x}=-\frac{\dot{a}}{a^2},	
  \end{displaymath}
  \begin{displaymath}
   \dot{x}'=\frac{2a'\dot{a}-a\dot{a}'}{a^3}.	
  \end{displaymath}
 \end{proof}
 
 La soluci'on anterior no tiene sentido para $a=0$, esto significa que
 los elementos con primera componente igual a cero no son inversibles.
 
 El siguiente teorema demuestra que el c'alculo de la derivada anidada
 del inverso es posible.
 
 \begin{teorema} \label{teo:inverso} Sea $\mathbf{a} =
   (a,a',\dot{a},\dot{a}')$ un superadouble cuyas componentes son el
   valor, la derivada original, la derivada anidada y la derivada
   compuesta de la operación que él representa. Entonces se cumple
   que:
  \begin{displaymath}
	 \mathbf{a}^{-1} = \Big(\frac{1}{a},\Big(\frac{1}{a}\Big)',\dot{\Big(\frac{1}{a}\Big)},\dot{\Big(\frac{1}{a}\Big)}'\Big).
  \end{displaymath}
 \end{teorema}
 \begin{proof}
  Utilizando la regla de la derivada para la inversa y el producto se obtiene
  {\setlength\arraycolsep{2pt}
   \begin{eqnarray*}
    \mathbf{a}^{-1} & = & \Big(\frac{1}{a},-\frac{a'}{a^2},-\frac{\dot{a}}{a^2},\frac{2a'\dot{a}-a\dot{a}'}{a^3}\Big) \\
	                  & = & \Big(\frac{1}{a},\Big(\frac{1}{a}\Big)',\dot{\Big(\frac{1}{a}\Big)},\frac{2aa'\dot{a}}{a^4}-\frac{\dot{a}'}{a^2}\Big) \\
	                  & = & \Big(\frac{1}{a},\Big(\frac{1}{a}\Big)',\dot{\Big(\frac{1}{a}\Big)},\Big(-\dot{a}\cdot \frac{1}{a^2}\Big)'\Big) \\
	                  & = & \Big(\frac{1}{a},\Big(\frac{1}{a}\Big)',\dot{\Big(\frac{1}{a}\Big)},\dot{\Big(\frac{1}{a}\Big)}'\Big). 
   \end{eqnarray*}}
 \end{proof}
 
 \begin{corolario}
   Sea $(a,a',\dot{a},\dot{a}')$ y $(b,b',\dot{b},\dot{b}')$ dos
   superadoubles cuyas componentes son el valor, la derivada
   original, la derivada anidada y la derivada compuesta de las
   operaciones que ellas representan. Entonces se
   cumple
  \begin{displaymath}
    (a,a',\dot{a},\dot{a}')/(b,b',\dot{b},\dot{b}')=(a/b,(a/b)',\dot{(a/b)},(\dot{a/b})'),
  \end{displaymath}
 \end{corolario}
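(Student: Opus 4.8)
El plan es reducir la divisi'on a una multiplicaci'on por el inverso, apoy'andose en los dos resultados inmediatamente anteriores: la proposici'on que proporciona la f'ormula expl'icita del inverso de un superadouble y el Teorema \ref{teo:inverso}, que garantiza que las cuatro componentes de ese inverso son, respectivamente, el valor, la derivada original, la derivada anidada y la derivada compuesta de la operaci'on $1/b$. De esta forma, la divisi'on queda expresada como producto de dos superadoubles que ya est'an ``en la forma correcta'', y basta invocar el teorema de las operaciones elementales.

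Primero usar'ia la definici'on de divisi'on en un anillo,
\begin{displaymath}
 (a,a',\dot{a},\dot{a}')/(b,b',\dot{b},\dot{b}') = (a,a',\dot{a},\dot{a}')\,(b,b',\dot{b},\dot{b}')^{-1}.
\end{displaymath}
Por el Teorema \ref{teo:inverso}, el factor $(b,b',\dot{b},\dot{b}')^{-1}$ es igual a $\Big(\frac{1}{b},\Big(\frac{1}{b}\Big)',\dot{\Big(\frac{1}{b}\Big)},\dot{\Big(\frac{1}{b}\Big)}'\Big)$, es decir, es un superadouble cuyas componentes representan el valor, la derivada original, la derivada anidada y la derivada compuesta de la operaci'on $1/b$. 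El superadouble $(a,a',\dot{a},\dot{a}')$ ya cumple esa misma propiedad para la operaci'on $a$ por hip'otesis del corolario. Entonces el segundo paso es aplicar el Teorema \ref{teo:operaciones} al producto de estos dos superadoubles: dicho teorema asegura que el producto es un superadouble cuyas componentes son el valor, la derivada original, la derivada anidada y la derivada compuesta de $a\cdot(1/b)=a/b$, que es precisamente la tesis.

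No anticipo ning'un obst'aculo t'ecnico: la prueba es una composici'on directa del Teorema \ref{teo:inverso} con el Teorema \ref{teo:operaciones}, sin c'alculos nuevos. Los 'unicos puntos a cuidar son la restricci'on $b\neq 0$, heredada de la existencia del inverso (los superadoubles con primera componente nula no son inversibles), y el hecho de que la notaci'on $\dot{(a/b)}$ y $(\dot{(a/b)})'$ del enunciado coincide con la que produce la regla del producto aplicada a $a$ y $1/b$; esto 'ultimo se sigue de que la derivaci'on (tanto anidada como compuesta) es compatible con el producto, hecho ya utilizado en la demostraci'on del Teorema \ref{teo:operaciones}.
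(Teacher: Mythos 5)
Tu propuesta es correcta y sigue esencialmente el mismo camino que el texto: reducir la divisi'on a la multiplicaci'on por el inverso y concluir combinando el Teorema \ref{teo:inverso} con el Teorema \ref{teo:operaciones}. Tu versi'on simplemente explicita los pasos (y la restricci'on $b\neq 0$) que la demostraci'on original deja impl'icitos.
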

 \begin{proof}
   Como la divisi'on de dos elementos es la multiplicaci'on de uno por
   el inverso del otro, entonces por el Teorema \ref{teo:inverso} y el
   Teorema \ref{teo:operaciones} se cumple la tesis.
 \end{proof} 
 
 Hasta este momento se ha demostrado que las operaciones elementales
 sobre $\mathbb{SA}$ permiten calcular derivadas anidadas. Para poder
 utilizar los superadoubles con cualquier funci'on diferenciable,
 falta probar que estas propiedades tambi'en se cumplen cuando se
 eval'ua una funci'on anal'itica cualquiera. El objetivo de la
 pr'oxima secci'on es precisamente la definici'on de funciones de
 variable superadouble.
 
 \section{Funciones de variable superadouble} \label{sec:funcionesSA}
 
 Al igual que los adoubles, las definiciones de funciones se facilita 
 si se introduce un cambio de notaci'on en los superadoubles. Para esta nueva notaci'on 
 es necesario definir sobre este espacio una estructura de 'algebra y para definir
 un 'algebra sobre $\mathbb{SA}$ falta definir una estructura vectorial y despu'es demostrar su
 compatibilidad con el anillo.
 
 La siguiente proposici'on ayuda a que la definici'on de espacio vectorial
 resulte natural. 
  
 \begin{proposicion}
  $\mathbb{SA}$ contiene a $\mathbb{A}$ y $\mathbb{R}$ como estructura de anillo.
 \end{proposicion}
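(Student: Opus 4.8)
El plan es reproducir, un nivel m\'as arriba, el argumento de la Proposici\'on \ref{pro:contieneR}. Primero exhibir\'ia el subanillo candidato dentro de $\mathbb{SA}$: sea
\[
  R_{\mathbb{A}} = \{(\hat{a},(0,0))\in\mathbb{SA} : \hat{a}\in\mathbb{A}\}.
\]
Por el teorema de caracterizaci\'on de subanillos del Cap\'itulo \ref{cha:preliminares} basta verificar que $R_{\mathbb{A}}$ es no vac\'io y cerrado para la resta y la multiplicaci\'on. Lo primero es inmediato, pues $((0,0),(0,0))\in R_{\mathbb{A}}$. Para lo segundo, usando la Definici\'on \ref{def:superadouble1} (y que $\mathbb{SA}$ es un anillo, con lo que la resta es sumar el opuesto),
\[
  (\hat{a},(0,0)) - (\hat{c},(0,0)) = (\hat{a}-\hat{c},(0,0)),
\]
\[
  (\hat{a},(0,0))(\hat{c},(0,0)) = (\hat{a}\hat{c},(0,0)),
\]
y ambos resultados pertenecen a $R_{\mathbb{A}}$ porque $\mathbb{A}$ es un anillo. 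Luego $R_{\mathbb{A}}$ es un subanillo de $\mathbb{SA}$.

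A continuaci\'on construir\'ia el isomorfismo con $\mathbb{A}$: la aplicaci\'on $\psi:\mathbb{A}\to R_{\mathbb{A}}$ dada por $\psi(\hat{a}) = (\hat{a},(0,0))$ es una biyecci\'on, y las dos identidades anteriores (junto con la f\'ormula an\'aloga para la suma) muestran que $\psi$ respeta la suma y el producto, de modo que es un isomorfismo de anillos. Con esto queda que $\mathbb{SA}$ contiene a $\mathbb{A}$ como estructura de anillo.

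Finalmente, para $\mathbb{R}$ compondr\'ia con la inmersi\'on $\mathbb{R}\hookrightarrow\mathbb{A}$ de la Proposici\'on \ref{pro:contieneR}: el conjunto $\{((a,0),(0,0)) : a\in\mathbb{R}\}$ resulta ser un subanillo de $\mathbb{SA}$ isomorfo a $\mathbb{R}$; alternativamente se repite directamente sobre ese conjunto el c\'alculo del criterio de subanillo. No preveo dificultad alguna: lo \'unico delicado es la contabilidad de los distintos ceros —el $0$ real, el adouble $(0,0)$ y el superadouble $((0,0),(0,0))$— y observar que esta inmersi\'on es la que luego legitimar\'a el cambio de notaci\'on (an\'alogo a la unidad dual de $\mathbb{A}$) que se introduce en el resto de la secci\'on.
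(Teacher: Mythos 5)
Tu propuesta es correcta y sigue esencialmente el mismo camino que el texto: exhibir el subanillo de los superadoubles con segunda componente nula, verificar el criterio de cierre para la resta y el producto, identificarlo con $\mathbb{A}$, y obtener la inclusi\'on de $\mathbb{R}$ componiendo con la Proposici\'on \ref{pro:contieneR}. La \'unica diferencia es puramente notacional (t\'u trabajas con pares de adoubles $(\hat{a},(0,0))$ y el texto con cuaternas $(a,a',0,0)$ de $\mathbb{SA}_{\mathbb{R}}$), y tu versi\'on es algo m\'as expl\'icita al escribir el isomorfismo $\psi$.
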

 \begin{proof}
  Tomando $A = \{(a,a',0,0)\in\mathbb{SA}:a,a'\in \mathbb{R}\}$ y restando 
  y multiplicando dos elementos cualesquiera de este conjunto se obtiene que 
  {\setlength\arraycolsep{2pt}
   \begin{eqnarray*}
    (a,a',0,0)-(b,b',0,0)&=&(a-b,a'-b',0,0)\in A, \\
    (a,a',0,0)(b,b',0,0)&=&(ab,ab'+a'b,0,0)\in A,
   \end{eqnarray*}
  }por tanto, $A$ es un subanillo de $\mathbb{SA}$ que es isomorfo a $\mathbb{A}$. 
  
  Como $\mathbb{SA}$ contiene a $\mathbb{A}$, esto implica que $\mathbb{SA}$ también 
  contiene a los reales por la Proposici'on \ref{pro:contieneR}. Este subanillo es
  el conjunto $\{(a,0,0,0)\in\mathbb{SA}:a\in\mathbb{R}\}$.
 \end{proof}
 
 \begin{proposicion}
  El espacio SuperAdouble en un espacio vectorial sobre $\mathbb{R}$ con la 
  suma $\eqref{eq:suma}$ y el producto externo definido por
  \begin{equation}
   \alpha(a,a',\dot{a},\dot{a}')=(\alpha,0,0,0)(a,a',\dot{a},\dot{a}')=(\alpha a,\alpha a',\alpha \dot{a},\alpha \dot{a}') \label{eq:productoR}.
  \end{equation}
 \end{proposicion}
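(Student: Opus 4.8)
\emph{Plan de demostraci'on.} La estrategia es verificar los cinco axiomas de la definici'on de espacio vectorial, aprovechando que casi todo el trabajo ya est'a hecho. Primero observar'ia que, por la Proposici'on \ref{prop:SAR-anillo-conmutativo} (equivalentemente, por la Proposici'on \ref{prop:sa-anillo-conmutativo-unitario} a trav'es del isomorfismo $\phi$), el espacio $\mathbb{SA}$ es un anillo conmutativo y, en particular, un grupo abeliano para la suma \eqref{eq:suma}; esto resuelve el primer axioma.

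Para los cuatro axiomas restantes explotar'ia que, por la propia definici'on \eqref{eq:productoR}, el producto externo $\alpha(a,a',\dot a,\dot a')$ no es m'as que el producto del anillo entre $(a,a',\dot a,\dot a')$ y el real inmerso $(\alpha,0,0,0)$. As'i, cada axioma de espacio vectorial se reduce a un axioma de anillo: la igualdad $\lambda(\mu\mathbf{x})=(\lambda\mu)\mathbf{x}$ se sigue de la asociatividad del producto y de que $(\lambda,0,0,0)(\mu,0,0,0)=(\lambda\mu,0,0,0)$ (una instancia de \eqref{eq:producto}); la igualdad $1\mathbf{x}=\mathbf{x}$ se sigue de que $(1,0,0,0)$ es la identidad multiplicativa; la distributividad $(\lambda+\mu)\mathbf{x}=\lambda\mathbf{x}+\mu\mathbf{x}$ se obtiene de que $(\lambda+\mu,0,0,0)=(\lambda,0,0,0)+(\mu,0,0,0)$ junto con la distributividad del producto a la derecha; y $\lambda(\mathbf{x}+\mathbf{y})=\lambda\mathbf{x}+\lambda\mathbf{y}$ es la distributividad del producto a la izquierda. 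En cada caso basta sustituir en \eqref{eq:suma} y \eqref{eq:producto} y comparar componentes.

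Alternativamente —y este es el camino m'as corto— se~nalar'ia que, bajo el isomorfismo $\phi$, la suma \eqref{eq:suma} coincide exactamente con la suma componente a componente de $\mathbb{R}^4$ y el producto externo \eqref{eq:productoR} con la multiplicaci'on escalar usual de $\mathbb{R}^4$, de modo que $\mathbb{SA}_{\mathbb{R}}$ no es otra cosa que $\mathbb{R}^4$ con su estructura vectorial est'andar; esto imita el argumento ya usado para $\mathbb{A}$. No anticipo ning'un obst'aculo real: lo 'unico que requiere cuidado es confirmar que la multiplicaci'on por $(\alpha,0,0,0)$ produce en efecto el escalado componente a componente que afirma \eqref{eq:productoR}, lo cual es una verificaci'on de una l'inea al evaluar \eqref{eq:producto} con $\mathbf{y}=(\alpha,0,0,0)$.
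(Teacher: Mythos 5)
Tu propuesta es correcta y, en su variante final (``el camino m\'as corto''), coincide exactamente con la demostraci\'on del texto, que se limita a observar que la suma y el producto externo son las operaciones usuales de $\mathbb{R}^4$ como espacio vectorial. La primera ruta que describes, reduciendo cada axioma vectorial a un axioma del anillo v\'ia la inmersi\'on $\alpha\mapsto(\alpha,0,0,0)$, tambi\'en es v\'alida pero resulta innecesariamente larga para este enunciado.
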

 \begin{proof}
  Las operaciones de suma y producto externo coinciden con las usuales sobre $\mathbb{R}^4$ como espacio vectorial.
 \end{proof}

 \begin{proposicion}
  $\mathbb{SA}$ es un álgebra conmutativa y unitaria con las operaciones $\eqref{eq:suma}, 
  \eqref{eq:producto}$ y \eqref{eq:productoR}.
 \end{proposicion}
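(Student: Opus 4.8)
El plan es reducir la afirmaci'on a resultados ya disponibles. Trabajando con la representaci'on en coordenadas (es decir, con $\mathbb{SA}_\mathbb{R}$, que por el isomorfismo $\phi$ es indistinguible de $\mathbb{SA}$), por la Proposici'on \ref{prop:sa-anillo-conmutativo-unitario} ya se tiene que $\mathbb{SA}$ es un anillo conmutativo y unitario con las operaciones \eqref{eq:suma} y \eqref{eq:producto}, y por la proposici'on anterior ya es un espacio vectorial sobre $\mathbb{R}$ con \eqref{eq:suma} y \eqref{eq:productoR}. En consecuencia, lo 'unico que resta por verificar para afirmar que $\mathbb{SA}$ es un 'algebra es el axioma de compatibilidad entre el producto externo y el producto del anillo,
\begin{displaymath}
 (\alpha\mathbf{x})(\beta\mathbf{y}) = (\alpha\beta)(\mathbf{x}\mathbf{y}), \qquad \forall\,\alpha,\beta\in\mathbb{R},\ \ \forall\,\mathbf{x},\mathbf{y}\in\mathbb{SA}.
\end{displaymath}
La conmutatividad y el car'acter unitario del 'algebra no exigen trabajo adicional, pues se heredan de la estructura de anillo, siendo la identidad el elemento $\mathbf{1}=(1,0,0,0)$.

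La observaci'on clave es que, por la propia definici'on \eqref{eq:productoR}, el producto externo por el escalar $\alpha$ coincide con la multiplicaci'on del anillo por el superadouble $(\alpha,0,0,0)$, esto es, $\alpha\mathbf{x}=(\alpha,0,0,0)\,\mathbf{x}$. Un c'alculo sencillo con \eqref{eq:producto} muestra adem'as que
\begin{displaymath}
 (\alpha,0,0,0)(\beta,0,0,0)=(\alpha\beta,0,0,0),
\end{displaymath}
que no es m'as que la afirmaci'on de que la inclusi'on $\mathbb{R}\hookrightarrow\mathbb{SA}$, $\alpha\mapsto(\alpha,0,0,0)$, respeta el producto, hecho ya contenido en la proposici'on que establece que $\mathbb{SA}$ contiene a $\mathbb{R}$ como anillo.

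Con estos dos ingredientes el axioma se deduce usando solamente la asociatividad y la conmutatividad del producto del anillo:
\begin{displaymath}
 (\alpha\mathbf{x})(\beta\mathbf{y}) = \big((\alpha,0,0,0)\,\mathbf{x}\big)\big((\beta,0,0,0)\,\mathbf{y}\big) = (\alpha,0,0,0)(\beta,0,0,0)\,(\mathbf{x}\mathbf{y}) = (\alpha\beta,0,0,0)\,(\mathbf{x}\mathbf{y}) = (\alpha\beta)(\mathbf{x}\mathbf{y}).
\end{displaymath}
Alternativamente, el c'alculo puede hacerse en coordenadas: para $\mathbf{x}=(x,x',\dot{x},\dot{x}')$ y $\mathbf{y}=(y,y',\dot{y},\dot{y}')$, se calculan $\alpha\mathbf{x}$ y $\beta\mathbf{y}$ con \eqref{eq:productoR}, se aplica \eqref{eq:producto} y se comprueba componente a componente que el resultado es $\alpha\beta$ veces la componente correspondiente de $\mathbf{x}\mathbf{y}$; cada igualdad que interviene es del tipo $(\alpha u)(\beta v)=(\alpha\beta)(uv)$, v'alida en $\mathbb{R}$.

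No anticipo ning'un obst'aculo de fondo: el razonamiento es el mismo que el empleado para $\mathbb{A}$ en la proposici'on an'aloga, y lo 'unico m'inimamente delicado es reconocer el producto externo como multiplicaci'on por el escalar incrustado; una vez establecido eso, todo se sigue de propiedades de anillo ya demostradas y de c'alculo rutinario.
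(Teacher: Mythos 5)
Tu propuesta es correcta y la reducci\'on es la misma que hace el trabajo: puesto que la estructura de anillo conmutativo y unitario y la de espacio vectorial ya est\'an establecidas, lo \'unico que queda es el axioma de compatibilidad $(\alpha\mathbf{a})(\beta\mathbf{b})=(\alpha\beta)\mathbf{a}\mathbf{b}$. Donde difieres es en c\'omo lo verificas. El trabajo lo hace por c\'omputo directo en coordenadas: calcula $(\alpha\mathbf{a})(\beta\mathbf{b})$ con \eqref{eq:productoR} y \eqref{eq:producto} y comprueba que cada componente sale multiplicada por $\alpha\beta$ --- exactamente tu ruta ``alternativa''. Tu argumento principal, en cambio, explota que \eqref{eq:productoR} define el producto externo como multiplicaci\'on del anillo por el escalar incrustado $(\alpha,0,0,0)$, de modo que la identidad se sigue de la asociatividad y conmutatividad del producto ya demostradas, junto con $(\alpha,0,0,0)(\beta,0,0,0)=(\alpha\beta,0,0,0)$. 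Esa versi\'on estructural es m\'as limpia, evita repetir el c\'alculo de cuatro componentes y deja claro por qu\'e el axioma es autom\'atico siempre que el producto externo se defina por incrustaci\'on del cuerpo en el centro del anillo (el mismo fen\'omeno ocurre en la proposici\'on an\'aloga para $\mathbb{A}$); el c\'omputo expl\'icito del trabajo, por su parte, no presupone reconocer esa identificaci\'on y es autocontenido. Ambas v\'ias son v\'alidas y no hay ning\'un hueco en tu argumento.
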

 \begin{proof}
  Hay que demostrar que $\forall\alpha,\beta\in\mathbb{R}$ y $\forall\mathbf{a},\mathbf{b}\in\mathbb{SA}$ se cumple 
  \begin{displaymath}
	 (\alpha\mathbf{a})(\beta\mathbf{b})=(\alpha\beta)\mathbf{a}\mathbf{b}.
  \end{displaymath}
  
  Sea $\mathbf{a}=(a,a',\dot{a},\dot{a}')$ y $\mathbf{b}=(b,b',\dot{b},\dot{b}')$, 
  entonces se cumple 
  {\setlength\arraycolsep{2pt}
   \begin{eqnarray*}
    (\alpha\mathbf{a})(\beta\mathbf{b}) & = & (\alpha\beta ab,\alpha\beta (ab'+a'b),
                                               \alpha\beta (a\dot{b}+b\dot{a}), \\
                                        &   &  \alpha\beta (a\dot{b}'+a'\dot{b}+\dot{a}b'+\dot{a}'b)) \\
                                        & = & (\alpha\beta)\mathbf{a}\mathbf{b}.
   \end{eqnarray*}}
  \end{proof}

 Despu'es de esta proposici'on se est'a en condiciones de presentar una nueva notaci'on con el prop'osito comentado
 anteriormente: facilitar la definici'on de funciones de variable superadouble.
 
 Como 
 \begin{displaymath}
  (a,a',\dot{a},\dot{a}') = (a,0,0,0)+(0,a',0,0)+(0,0,\dot{a},0)+(0,0,0,\dot{a}'),	
 \end{displaymath}
 se puede denotar
 \begin{displaymath}
  \epsilon_1= (0,1,0,0),	
 \end{displaymath}
 \begin{displaymath}
  \epsilon_2=(0,0,1,0),	
 \end{displaymath}
 \begin{displaymath}
  \epsilon_3=(0,0,0,1),	
 \end{displaymath}
 y se podría reescribir los superadoubles como
 \begin{displaymath}
  (a,a',\dot{a},\dot{a}') = a+\epsilon_1 a'+\epsilon_2 \dot{a}+\epsilon_3 \dot{a}'.	
 \end{displaymath}
 
 Como la definici'on de funciones se realizar'a a trav'es de la serie
 de Taylor de las funciones reales, se encontrar'an factores donde
 aparecen los productos $\epsilon_i\epsilon_j$. La siguiente
 proposici'on proporciona estos resultados.
 
 \begin{proposicion} \label{pro:epsilons}
  Los productos de cada par de valores $\epsilon_i,\epsilon_j$ son
  \begin{displaymath}
	 \epsilon_1\epsilon_1=0,
	\end{displaymath}
	\begin{displaymath}
   \epsilon_1\epsilon_2=\epsilon_3,
  \end{displaymath}
  \begin{displaymath}
   \epsilon_1\epsilon_3=0,
  \end{displaymath}
  \begin{displaymath}
   \epsilon_2\epsilon_2=0,
  \end{displaymath}
  \begin{displaymath}
   \epsilon_2\epsilon_3=0,
  \end{displaymath} 
  \begin{displaymath}
   \epsilon_3\epsilon_3=0.
  \end{displaymath}
 \end{proposicion}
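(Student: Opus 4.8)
The plan is to verify each of the six identities by direct substitution into the multiplication formula \eqref{eq:producto} of Definition~\ref{def:superadouble2}, since every $\epsilon_i$ is a concrete element of $\mathbb{R}^4$ and the product is given by an explicit polynomial expression in the eight component scalars.

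First I would recall that for $\mathbf{x}=(x,x',\dot{x},\dot{x}')$ and $\mathbf{y}=(y,y',\dot{y},\dot{y}')$ one has
\begin{displaymath}
 \mathbf{x}\mathbf{y}=(xy,\ xy'+x'y,\ x\dot{y}+\dot{x}y,\ x\dot{y}'+x'\dot{y}+\dot{x}y'+\dot{x}'y).
\end{displaymath}
The key observation is that $\epsilon_1=(0,1,0,0)$, $\epsilon_2=(0,0,1,0)$ and $\epsilon_3=(0,0,0,1)$ all have first component equal to $0$. Hence in any product of two of them the first component ($xy$), the second component ($xy'+x'y$) and the third component ($x\dot{y}+\dot{x}y$) vanish automatically, and the whole computation collapses to evaluating the single scalar $s(\mathbf{x},\mathbf{y})=x\dot{y}'+x'\dot{y}+\dot{x}y'+\dot{x}'y$ occupying the fourth slot. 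Moreover, by commutativity of the product (Proposition~\ref{prop:SAR-anillo-conmutativo}) it suffices to treat the ordered pairs $(\epsilon_1,\epsilon_1)$, $(\epsilon_1,\epsilon_2)$, $(\epsilon_1,\epsilon_3)$, $(\epsilon_2,\epsilon_2)$, $(\epsilon_2,\epsilon_3)$ and $(\epsilon_3,\epsilon_3)$.

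Next I would analyze the scalar $s$. Two of its terms, $x'\dot{y}$ and $\dot{x}y'$, pair the second slot (the ``$a'$-slot'') of one factor with the third slot (the ``$\dot{a}$-slot'') of the other; the remaining two terms, $x\dot{y}'$ and $\dot{x}'y$, each involve the first slot of one of the factors and therefore vanish for all our basis elements. Among $\epsilon_1,\epsilon_2,\epsilon_3$ only $\epsilon_1$ has a nonzero $a'$-slot and only $\epsilon_2$ has a nonzero $\dot{a}$-slot, so $s$ is nonzero exactly for the unordered pair $\{\epsilon_1,\epsilon_2\}$, where it equals $1$. This yields $\epsilon_1\epsilon_2=(0,0,0,1)=\epsilon_3$ and $\epsilon_i\epsilon_j=\mathbf{0}$ in every other case, which is precisely the assertion of the proposition.

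I do not expect any genuine obstacle here: the statement is a finite verification and the shape of the multiplication rule makes it nearly immediate. The only point deserving a little care is the bookkeeping of the four cross-terms in the fourth component; writing out $\epsilon_1\epsilon_2$ once in full and then invoking the vanishing pattern above for the remaining five pairs keeps the proof short and free of error-prone repetition.
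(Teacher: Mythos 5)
Your proposal is correct and follows essentially the same route as the paper: a direct, finite verification of the six products from the explicit multiplication formula of $\mathbb{SA}_{\mathbb{R}}$, using commutativity to avoid the remaining ordered pairs. The only difference is cosmetic --- you organize the computation by noting that the first three components vanish automatically and isolating the two surviving cross-terms, whereas the paper simply writes out all six products.
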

 \begin{proof}
  Estos resultados se obtienen de realizar los siguientes
  c'alculos.
  {\setlength\arraycolsep{2pt}
  \begin{eqnarray*}
   \epsilon_1\epsilon_1&=&(0,1,0,0)(0,1,0,0)=0,\\
   \epsilon_1\epsilon_2&=&(0,1,0,0)(0,0,1,0)=\epsilon_3,\\
   \epsilon_1\epsilon_3&=&(0,1,0,0)(0,0,0,1)=0,\\
   \epsilon_2\epsilon_2&=&(0,0,1,0)(0,0,1,0)=0,\\
   \epsilon_2\epsilon_3&=&(0,0,1,0)(0,0,0,1)=0,\\
   \epsilon_3\epsilon_3&=&(0,0,0,1)(0,0,0,1)=0.
  \end{eqnarray*}
 }Solo es necesario hacer estos seis cálculos porque el álgebra es conmutativa.
 \end{proof}
 
 Estos productos se resumen en la siguiente tabla.
 \begin{displaymath}
	\begin{tabular}{|c|c|c|c|}
   \hline
   &$\epsilon_1$&$\epsilon_2$&$\epsilon_3$ \\
   \hline
   $\epsilon_1$&0&$\epsilon_3$&0 \\
   \hline
   $\epsilon_2$&$\epsilon_3$&0&0 \\
   \hline
   $\epsilon_3$&0&0&0\\
   \hline
  \end{tabular}
 \end{displaymath}
 
 Lo que se quiere en este punto es definir sobre $\mathbb{SA}$
 funciones que permitan calcular derivadas anidadas y que sean
 compatible con $\mathbb{A}$ y $\mathbb{R}$, es decir, que al
 restringir una función de $\mathbb{SA}$ a alguno de estos dos
 espacios, se obtengan valores de estos espacios. Para definir estas
 funciones, al igual que con los adoubles, se partir'a del desarrollo
 en serie de Taylor de las funciones reales anal'iticas.

 \begin{teorema}
   Sea $f$ una función real anal'itica, $y=h(x)$ la variable anidada y
   $x$ la variable original. Entonces si
   $\mathbf{a}=a+\epsilon_1a'+\epsilon_2\dot{a}+\epsilon_3\dot{a}'$, y
   $\mathbf{b} =
   f(\mathbf{a})=b+\epsilon_1b'+\epsilon_2\dot{b}+\epsilon_3\dot{b}'$,
   se tiene que
   \begin{enumerate}
     \item $b$ es el valor de la función: f(a),
     \item $b'$ es la derivada original: $\frac{df(a)}{dx}$,
     \item $\dot{b}$ es la derivada anidada: $\frac{df(a)}{dy}$, y
     \item $\dot{b}'$ es la derivada compuesta: $\frac{d^2f(a)}{dxdy}$.
   \end{enumerate}
 \end{teorema}
 \begin{proof}
  Como $\mathbb{SA}$ es un 'algebra conmutativa, por el Teorema \ref{teo:binomio}
  se cumple
  \begin{equation}
   (a+\epsilon_1 a'+\epsilon_2 \dot{a}+\epsilon_3 \dot{a}')^n=
   \sum{\frac{n!}{p_1!p_2!p_3!p_4!}a^{p_1}(\epsilon_1 a')^{p_2}
   (\epsilon_2 \dot{a})^{p_3}(\epsilon_3\dot{a}')^{p_4}}, \label{eq:cuatrinomio}
  \end{equation}
  donde la suma del segundo miembro se extiende a todas las combinaciones de 
  $p_1, p_2, p_3$ y $p_4$ de enteros no negativos tales que
  \begin{displaymath}
   \sum_{i=1}^4p_i=n.	
  \end{displaymath}

 En la suma \eqref{eq:cuatrinomio} existen muchos factores que se anulan
 cuando se aplica la Proposici'on \ref{pro:epsilons}, por lo tanto, 
 s'olo se tratarán los cinco casos donde esto no ocurre.\\ \\
 \textbf{Caso 1:} Si $p_2=p_3=p_4=0$, entonces $p_1=n$ y 
 \[\frac{n!}{p_1!p_2!p_3!p_4!}a^{p_1}(\epsilon_1 a')^{p_2}(\epsilon_2 \dot{a})^{p_3}(\epsilon_3\dot{a}')^{p_4}=a^n.\]
 \textbf{Caso 2:} Si $p_2=1$ y $p_3=p_4=0$, entonces $p_1=n-1$ y 
 \[\frac{n!}{p_1!p_2!p_3!p_4!}a^{p_1}(\epsilon_1 a')^{p_2}(\epsilon_2 \dot{a})^{p_3}(\epsilon_3\dot{a}')^{p_4}=\epsilon_1na^{n-1}a'.\]
 \textbf{Caso 3:} Si $p_3=1$ y $p_2=p_4=0$, entonces $p_1=n-1$ y 
 \[\frac{n!}{p_1!p_2!p_3!p_4!}a^{p_1}(\epsilon_1 a')^{p_2}(\epsilon_2 \dot{a})^{p_3}(\epsilon_3\dot{a}')^{p_4}=\epsilon_2na^{n-1}\dot{a}.\]
 \textbf{Caso 4:} Si $p_4=1$ y $p_2=p_3=0$, entonces $p_1=n-1$ y 
 \[\frac{n!}{p_1!p_2!p_3!p_4!}a^{p_1}(\epsilon_1 a')^{p_2}(\epsilon_2 \dot{a})^{p_3}(\epsilon_3\dot{a}')^{p_4}=\epsilon_3na^{n-1}\dot{a}'.\]
 \textbf{Caso 5:} Si $p_2=p_3=1$ y $p_4=0$, entonces $p_1=n-2$ y 
 \[\frac{n!}{p_1!p_2!p_3!p_4!}a^{p_1}(\epsilon_1 a')^{p_2}(\epsilon_2 \dot{a})^{p_3}(\epsilon_3\dot{a}')^{p_4}=\epsilon_3(n-1)na^{n-2}a'\dot{a}.\]

 Considerando los resultados obtenidos en los cinco casos a partir de (\ref{eq:cuatrinomio}) 
 y us'andolos en (\ref{eq:superfuncion}) se obtiene que
 {\setlength\arraycolsep{2pt}
  \begin{eqnarray*}
   f((a,a',\dot{a},\dot{a}'))&=&\sum_{n=0}^\infty \frac{c_n(a+\epsilon_1a'+\epsilon_2\dot{a}+\epsilon_3\dot{a}')^n}{n!} \\                   &=&\sum_{n=0}^\infty\frac{c_na^n}{n!}+\epsilon_1a'\sum_{n=1}^\infty \frac{c_na^{n-1}}{(n-1)!}+\epsilon_2\dot{a}\sum_{n=1}^\infty \frac{c_na^{n-1}}{(n-1)!} \\  & & \ + \epsilon_3\Bigg(a'\dot{a}\sum_{n=2}^\infty \frac{c_na^{n-2}}{(n-2)!} + \dot{a}'\sum_{n=1}^\infty \frac{c_na^{n-1}}{(n-1)!}\Bigg).
  \end{eqnarray*}
 }

 Como
 \begin{displaymath}
  \sum_{n=1}^\infty \frac{c_na^{n-1}}{(n-1)!}=\dot{f}(a),	 
 \end{displaymath}
 \begin{displaymath}
 	\sum_{n=2}^\infty \frac{c_na^{n-2}}{(n-2)!}=\ddot{f}(a),
 \end{displaymath}
 se concluye que si
 $\mathbf{a}=a+\epsilon_1a'+\epsilon_2\dot{a}+\epsilon_3\dot{a}'$,
 entonces
  \begin{equation}  \label{eq:superfuncion}
	 f(\mathbf{a})=f(a)+\epsilon_1a'\dot{f}(a)+\epsilon_2\dot{a}\dot{f}(a)+
   \epsilon_3(a'\dot{a}\ddot{f}(a)+\dot{a}'\dot{f}(a)). 
  \end{equation}
  
  Usando la regla de la cadena se obtiene 
  \begin{displaymath}
	 f(\mathbf{a})=b+ \epsilon_1 b' + \epsilon_2 \dot{b} + \epsilon_3 \dot{b}'.
  \end{displaymath}
  
  Para obtener el resultado en la cuarta componente se aplica la regla de
  la cadena y la regla del producto para las derivadas como sigue:
  \begin{displaymath}
	 a'\dot{a}\ddot{f}(a)+\dot{a}'\dot{f}(a)=(\dot{a}\dot{f}(a))'=\dot{f}'(a). 
  \end{displaymath}
 \end{proof}
 
 Para comprobar que \eqref{eq:superfuncion} es una extensi'on de $\mathbb{R}$ a $\mathbb{SA}$, 
 basta evaluarlo en puntos de la forma $(a,0,0,0)$ y para ver que es una extensi'on de
 $\mathbb{A}$ a $\mathbb{SA}$, s'olo es necesario evaluarlo en puntos de la forma
 $(a,a',0,0)$. 

 Análogo al espacio Adouble, se puede definir una función de $n$
 variables superadoubles de la siguiente forma:
 {\setlength\arraycolsep{2pt}
  \begin{eqnarray*}
   f(x_1,\ldots,x_n) & = & f(a_{11},\ldots,a_{1n}) \\ 
   & & +\ \epsilon_1(a_{21}f'_{x_1}(a_{11},\ldots,a_{1n}) +\cdots+a_{2n}f'_{x_n}(a_{11},\ldots,a_{1n})) \\
   & & +\ \epsilon_2(b_{11}f'_{x_1}(a_{11},\ldots,a_{1n}) +\cdots+b_{1n}f'_{x_n}(a_{11},\ldots,a_{1n})) \\
   & & +\ \epsilon_3(a_{21}b_{11}f''_{x_1}(a_{11},\ldots,a_{1n}) +\cdots+a_{2n}b_{1n}f''_{x_n}(a_{11},\ldots,a_{1n}) \\
   & & +\ b_{21}f'_{x_1}(a_{11},\ldots,a_{1n}) +\cdots+b_{2n}f'_{x_n}(a_{11},\ldots,a_{1n})),
  \end{eqnarray*}
 }donde $x_i=(a_{1i},a_{2i},b_{1i},b_{2i})$ para todo $i=1,\ldots,n$.

 Todo lo desarrollado en esta sección para funciones de una variable
 superadouble se cumple para estas funciones en varias variables.

 En este cap'itulo se demostr'o que la metodolog'ia presentada en el
 Cap'itulo \ref{cha:metodologia} es una herramienta v'alida para
 calcular derivadas anidadas.

\chapter*{Conclusiones}
 
 \addcontentsline{toc}{chapter}{Conclusiones}

 Con la metodología propuesta en este trabajo es posible utilizar la
 diferenciación automática para calcular derivadas anidadas. Esta
 metodología es sencilla de implementar gracias a que se puede reusar
 otras librerías de AD que soporten sobrecarga de operadores.  Además,
 se hizo un estudio algebraico de las propiedades de los números que
 sustentan tanto la diferenciación automática como la diferenciación
 automática anidada.

 Como recomendaciones y trabajo futuro se propone implementar el modo
 hacia atrás de la diferenciación automática con los números de tipo
 SuperAdouble, lo cual sería útil en los casos que la función anidada
 con dominio en $\mathbb{R}^n$ e imagen en $\mathbb{R}$, y generalizar
 los resultados obtenidos para el caso en que haya más de un nivel de
 anidación.


\chapter*{Anexos}
 
 \addcontentsline{toc}{chapter}{Anexos}

 En este anexo se incluyen las demostraciones de las proposiciones
 \ref{prop:sa-anillo-conmutativo-unitario} y
 \ref{prop:SAR-anillo-conmutativo}. 
 
 \setcounter{proposicion}{0}
 \begin{proposicion}
  $\mathbb{SA}$ es un anillo conmutativo y unitario. El neutro con respecto a la 
  suma es el elemento $\mathbf{0} = ((0,0),(0,0))$ y el neutro respecto al producto es 
  $\mathbf{1} = ((1,0),(0,0))$. 
 \end{proposicion}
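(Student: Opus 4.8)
El plan es verificar directamente los axiomas de anillo conmutativo unitario sobre $\mathbb{SA}$, apoy'andose en que $\mathbb{A}$ ya es un anillo conmutativo unitario (con cero $(0,0)$ e identidad $(1,0)$). El punto clave es que todas las igualdades entre componentes se justificar'an invocando la asociatividad, conmutatividad y distributividad \emph{del anillo} $\mathbb{A}$, y no la aritm'etica de los reales ni propiedades que $\mathbb{A}$ no posee (cancelaci'on, inversos). De hecho, $\mathbb{SA}$ es exactamente la construcci'on de n'umeros duales $\mathbb{A}[\epsilon]/(\epsilon^2)$ sobre el anillo base $\mathbb{A}$, igual que $\mathbb{A}$ era $\mathbb{R}[\epsilon]/(\epsilon^2)$, por lo que todo lo que sigue es la verificaci'on est'andar de que esa construcci'on produce un anillo conmutativo unitario.

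Primero probar'ia que $(\mathbb{SA},+)$ es un grupo abeliano: como la suma se define componente a componente y cada componente vive en el grupo abeliano $(\mathbb{A},+)$, la asociatividad y la conmutatividad se heredan de inmediato; el neutro es $\mathbf{0}=((0,0),(0,0))$ (pues $(0,0)$ es el neutro aditivo de $\mathbb{A}$) y el opuesto de $(\hat{a},\hat{b})$ es $(-\hat{a},-\hat{b})$.

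Despu'es verificar'ia las propiedades multiplicativas. Para la conmutatividad basta comparar $(\hat{a},\hat{b})(\hat{c},\hat{d})=(\hat{a}\hat{c},\hat{a}\hat{d}+\hat{b}\hat{c})$ con $(\hat{c},\hat{d})(\hat{a},\hat{b})=(\hat{c}\hat{a},\hat{c}\hat{b}+\hat{d}\hat{a})$ y usar que el producto y la suma en $\mathbb{A}$ son conmutativos. Para la asociatividad desarrollar'ia ambos miembros de $((\hat{a},\hat{b})(\hat{c},\hat{d}))(\hat{e},\hat{f})=(\hat{a},\hat{b})((\hat{c},\hat{d})(\hat{e},\hat{f}))$ y comprobar'ia que los dos dan $(\hat{a}\hat{c}\hat{e},\ \hat{a}\hat{c}\hat{f}+\hat{a}\hat{d}\hat{e}+\hat{b}\hat{c}\hat{e})$, empleando asociatividad del producto y distributividad en $\mathbb{A}$. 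Para la distributividad, por la conmutatividad ya establecida basta un lado: desarrollar'ia $(\hat{a},\hat{b})\big((\hat{c},\hat{d})+(\hat{e},\hat{f})\big)$ y $(\hat{a},\hat{b})(\hat{c},\hat{d})+(\hat{a},\hat{b})(\hat{e},\hat{f})$ y ver'ia que coinciden tras reordenar con la conmutatividad de $+$ en $\mathbb{A}$. Finalmente, para la identidad, calcular'ia $((1,0),(0,0))(\hat{c},\hat{d})=((1,0)\hat{c},\ (1,0)\hat{d}+(0,0)\hat{c})=(\hat{c},\hat{d})$, usando que $(1,0)$ es la identidad y $(0,0)$ el cero de $\mathbb{A}$ (y que en todo anillo $0\cdot x=0$).

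No espero obst'aculos serios: se trata de una comprobaci'on rutinaria. El 'unico cuidado relevante es no confundir las operaciones de $\mathbb{SA}$ con las de $\mathbb{R}^4$ y justificar cada paso s'olo con los axiomas de anillo conmutativo de $\mathbb{A}$; el resto es 'algebra elemental, y la demostraci'on completa se remite a los anexos.
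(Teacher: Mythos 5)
Your proposal is correct and takes essentially the same route as the paper's own proof in the annex: a direct, componentwise verification of every commutative unitary ring axiom for $\mathbb{SA}$, resting on the fact that $\mathbb{A}$ is itself a commutative unitary ring with zero $(0,0)$ and identity $(1,0)$. Your additional observation that $\mathbb{SA}$ is the dual-number construction $\mathbb{A}[\epsilon]/(\epsilon^2)$ over the base ring $\mathbb{A}$ is a useful conceptual framing, but the computations you outline coincide with those the paper carries out.
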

 \begin{proof}
  Sea $\mathbf{a},\mathbf{b},\mathbf{c}\in\mathbb{SA}$ con
  \begin{displaymath}
	 \mathbf{a} = (\hat{a}_1,\hat{a}_2), \ \ \hat{a}_1,\hat{a}_2\in\mathbb{A},
  \end{displaymath} 
  \begin{displaymath}
	 \mathbf{b} = (\hat{b}_1,\hat{b}_2), \ \ \hat{b}_1,\hat{b}_2\in\mathbb{A},
  \end{displaymath}
  \begin{displaymath}
	 \mathbf{c} = (\hat{c}_1,\hat{c}_2), \ \ \hat{c}_1,\hat{c}_2\in\mathbb{A}.
  \end{displaymath}
  
  Seg'un la Definici'on \ref{def:anillo} lo primero a demostrar es que $\mathbb{SA}$ para la
  suma es un grupo abeliano
  
  Sumando las expresiones
  \begin{displaymath}
	 (\mathbf{a}+\mathbf{b}) + \mathbf{c} = (\hat{a}_1 + \hat{b}_1, \hat{a}_2 + \hat{b}_2) + \mathbf{c}
	                                      = (\hat{a}_1 + \hat{b}_1 + \hat{c}_1 , \hat{a}_2 + \hat{b}_2 + \hat{c}_2),
  \end{displaymath}
  \begin{displaymath}
	 \mathbf{a}+(\mathbf{b} + \mathbf{c}) = \mathbf{a}+(\hat{b}_1 + \hat{c}_1, \hat{b}_2 + \hat{c}_2)
	                                      = (\hat{a}_1 + \hat{b}_1 + \hat{c}_1 , \hat{a}_2 + \hat{b}_2 + \hat{c}_2),
  \end{displaymath}
  se obtiene que
  \begin{displaymath}
	 (\mathbf{a}+\mathbf{b}) + \mathbf{c}= \mathbf{a} + (\mathbf{b}+\mathbf{c}).
  \end{displaymath}
  
  La suma es conmutativa:
  \begin{displaymath}
	 \mathbf{a}+\mathbf{b} = (\hat{a}_1 + \hat{b}_1, \hat{a}_2 + \hat{b}_2) = (\hat{b}_1 + \hat{a}_1, \hat{b}_2 + \hat{a}_2)
	                       = \mathbf{b}+\mathbf{a}.
  \end{displaymath}
  
  El $\mathbf{0}$ es elemento neutro para la suma:
  \begin{displaymath}
	 \mathbf{a}+\mathbf{0} = (\hat{a}_1,\hat{a}_2) + ((0,0),(0,0)) = (\hat{a}_1,\hat{a}_2) = \mathbf{a}.
  \end{displaymath}
  
  Todo elemento tiene opuesto para la suma:  
  Sea $\mathbf{a}'=(- \hat{a}_1,- \hat{a}_2)$, entonces
  \begin{displaymath}
	 \mathbf{a}+\mathbf{a}'= (\hat{a}_1,\hat{a}_2)+(- \hat{a}_1,- \hat{a}_2) = \mathbf{0}.
  \end{displaymath}
  
  Multiplicando las expresiones
  \begin{displaymath}
	 (\mathbf{a}\mathbf{b})\mathbf{c} = (\hat{a}_1 \hat{b}_1,\hat{a}_1\hat{b}_2+\hat{a}_2\hat{b}_1)(\hat{c}_1,\hat{c}_2)
	                                  = (\hat{a}_1 \hat{b}_1 \hat{c}_1,\hat{a}_1 \hat{b}_1\hat{c}_2 +
	                                     \hat{a}_1\hat{b}_2\hat{c}_1+\hat{a}_2\hat{b}_1\hat{c}_1),
  \end{displaymath}
  \begin{displaymath}
	 \mathbf{a}(\mathbf{b}\mathbf{c}) = (\hat{a}_1,\hat{a}_2)(\hat{b}_1 \hat{c}_1,\hat{b}_1\hat{c}_2+\hat{b}_2\hat{c}_1)
	                                  = (\hat{a}_1\hat{b}_1 \hat{c}_1,\hat{a}_1\hat{b}_1\hat{c}_2+
	                                     \hat{a}_1\hat{b}_2\hat{c}_1 + \hat{a}_2\hat{b}_1 \hat{c}_1),
  \end{displaymath}
  se obtiene que
  \begin{displaymath}
	 (\mathbf{a}\mathbf{b})\mathbf{c}=\mathbf{a}(\mathbf{b}\mathbf{c}).
  \end{displaymath}
  
  El producto es conmutativo:  
  \begin{displaymath}
	 \mathbf{a}\mathbf{b}=(\hat{a}_1,\hat{a}_2)(\hat{b}_1,\hat{b}_2)=(\hat{a}_1\hat{b}_1,\hat{a}_1\hat{b}_2+\hat{a}_2\hat{b}_1)
	                     =(\hat{b}_1\hat{a}_1,\hat{b}_1\hat{a}_2+\hat{b}_2\hat{a}_1) = \mathbf{b}\mathbf{a}.
  \end{displaymath}
  
  Multiplicando las siguiente expresiones
  \begin{displaymath}
	 \mathbf{a}(\mathbf{b}+\mathbf{c})=(\hat{a}_1,\hat{a}_2)(\hat{b}_1 + \hat{c}_1, \hat{b}_2 + \hat{c}_2)
	                                  =(\hat{a}_1\hat{b}_1+\hat{a}_1\hat{c}_1,\hat{a}_1\hat{b}_2+\hat{a}_1\hat{b}_2 +
	                                    \hat{a}_2\hat{b}_1 + \hat{a}_2\hat{c}_1),  
  \end{displaymath}
  \begin{displaymath}
	 \mathbf{a}\mathbf{b}+\mathbf{a}\mathbf{c} = (\hat{a}_1 \hat{b}_1,\hat{a}_1\hat{b}_2+\hat{a}_2\hat{b}_1) +
	                                             (\hat{a}_1 \hat{c}_1,\hat{a}_1\hat{c}_2+\hat{a}_2\hat{c}_1)
	                                           = (\hat{a}_1\hat{b}_1+\hat{a}_1\hat{c}_1,\hat{a}_1\hat{b}_2+\hat{a}_1\hat{b}_2 +
	                                              \hat{a}_2\hat{b}_1 + \hat{a}_2\hat{c}_1),   
  \end{displaymath}
  se obtiene que
  \begin{displaymath}
	 \mathbf{a}(\mathbf{b}+\mathbf{c})=\mathbf{a}\mathbf{b}+\mathbf{a}\mathbf{c}.
  \end{displaymath}
  
  El $\mathbf{1}$ es elemento neutro con respecto al producto:   
  \begin{displaymath}
	 \mathbf{a}\mathbf{1} = (\hat{a}_1,\hat{a}_2)((1,0),(0,0)) = (\hat{a}_1,\hat{a}_2) = \mathbf{a}.
  \end{displaymath}
 \end{proof}

 \begin{proposicion}
  El espacio $\mathbb{SA}_{\mathbb{R}}$ es un anillo conmutativo.
 \end{proposicion}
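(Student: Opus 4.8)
The plan is to verify the commutative-ring axioms (Definici\'on~\ref{def:anillo} together with commutativity of the product) directly on $\mathbb{R}^4$ equipped with the operations \eqref{eq:suma} and \eqref{eq:producto}, in the same spirit as the computation already carried out for $\mathbb{SA}$ in Proposici\'on~\ref{prop:sa-anillo-conmutativo-unitario}. Since the addition \eqref{eq:suma} is componentwise, the abelian-group part is immediate: associativity and commutativity of $+$ hold coordinate by coordinate because they hold in $(\mathbb{R},+)$, the neutral element is $\mathbf{0}=(0,0,0,0)$, and the opposite of $\mathbf{x}=(x,x',\dot{x},\dot{x}')$ is $(-x,-x',-\dot{x},-\dot{x}')$. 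Thus the only nontrivial work concerns the multiplication \eqref{eq:producto}.

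First I would settle commutativity: each of the four components of $\mathbf{x}\mathbf{y}$ in \eqref{eq:producto} is symmetric in $\mathbf{x}$ and $\mathbf{y}$ (for instance $xy'+x'y=yx'+y'x$, and the fourth entry $x\dot{y}'+x'\dot{y}+\dot{x}y'+\dot{x}'y$ is unchanged when the two factors are interchanged), so $\mathbf{x}\mathbf{y}=\mathbf{y}\mathbf{x}$ is inherited from commutativity of multiplication in $\mathbb{R}$. Once commutativity is known, left and right distributivity become equivalent, so it suffices to check one of them together with associativity of the product.

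For distributivity I would expand $\mathbf{x}(\mathbf{y}+\mathbf{z})$ and $\mathbf{x}\mathbf{y}+\mathbf{x}\mathbf{z}$ coordinate by coordinate: every component of \eqref{eq:producto} is a sum of terms, each of which is a product of one coordinate of the first factor by one coordinate of the second, so substituting $\mathbf{y}+\mathbf{z}$ for the second factor and using distributivity in $\mathbb{R}$ reproduces exactly the corresponding component of $\mathbf{x}\mathbf{y}+\mathbf{x}\mathbf{z}$. The associativity of the product is the step I expect to be the real obstacle --- not conceptually, but because of the amount of bookkeeping: one has to compute $(\mathbf{x}\mathbf{y})\mathbf{z}$ and $\mathbf{x}(\mathbf{y}\mathbf{z})$ and match all four coordinates, and the fourth coordinate (the one carrying the mixed second-order term) generates the largest collection of summands. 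To keep that calculation manageable I would rewrite $\mathbf{x}=x+x'\epsilon_1+\dot{x}\epsilon_2+\dot{x}'\epsilon_3$ and record, straight from \eqref{eq:producto}, the multiplication rules $\epsilon_1^2=\epsilon_2^2=\epsilon_3^2=\epsilon_1\epsilon_3=\epsilon_2\epsilon_3=0$ and $\epsilon_1\epsilon_2=\epsilon_3$; then associativity of \eqref{eq:producto} reduces to checking $(\epsilon_i\epsilon_j)\epsilon_k=\epsilon_i(\epsilon_j\epsilon_k)$ for all triples of basis elements, which is trivial because every product of three of the $\epsilon_i$ vanishes. Finally I would note, as the text itself anticipates, that $\mathbf{1}=(1,0,0,0)$ is a multiplicative identity, although unitarity is not part of the present statement.
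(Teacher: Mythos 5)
Your proof is correct, and its overall skeleton (dispatch the additive axioms as the usual componentwise structure of $\mathbb{R}^4$, then verify commutativity, distributivity and associativity of \eqref{eq:producto}) matches the paper's proof in the annex. Where you genuinely diverge is in the one step you correctly identify as the heavy one: associativity. The paper simply grinds it out, computing all four coordinates of $(\mathbf{a}\mathbf{b})\mathbf{c}$ and of $\mathbf{a}(\mathbf{b}\mathbf{c})$ and matching them term by term. You instead write $\mathbf{x}=x+x'\epsilon_1+\dot{x}\epsilon_2+\dot{x}'\epsilon_3$, observe that the product is $\mathbb{R}$-bilinear (each entry of \eqref{eq:producto} being a sum of products of one coordinate from each factor), and reduce associativity to the basis triples, where it is trivial because $\epsilon_1\epsilon_2=\epsilon_3$ and every other product --- hence every triple product --- of the $\epsilon_i$ vanishes; triples involving $(1,0,0,0)$ are handled by its being a two-sided identity. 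This is legitimate (distributivity and homogeneity, which you establish independently beforehand, are exactly what license the reduction to a basis) and it anticipates machinery the paper only introduces later, in Proposici\'on~\ref{pro:epsilons}. What your route buys is a short, error-resistant verification of the worst coordinate (the mixed second-order one); what the paper's brute-force route buys is that it needs no auxiliary notation and exhibits the explicit formulas for $(\mathbf{a}\mathbf{b})\mathbf{c}$ that are reused nowhere else but are at least self-contained.
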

 \begin{proof}
  Sea $\mathbf{a},\mathbf{b},\mathbf{c}\in\mathbb{SA}_\mathbb{R}$ con
  \begin{displaymath}
	 \mathbf{a} = (a_1,\dot{a}_1,a_2,\dot{a}_2), \ \ a_1,\dot{a}_1,a_2,\dot{a}_2\in\mathbb{R},
  \end{displaymath} 
  \begin{displaymath}
	 \mathbf{b} = (b_1,\dot{b}_1,b_2,\dot{b}_2), \ \ b_1,\dot{b}_1,b_2,\dot{b}_2\in\mathbb{R},
  \end{displaymath}
  \begin{displaymath}
	 \mathbf{c} = (c_1,\dot{c}_1,c_2,\dot{c}_2), \ \ c_1,\dot{c}_1,a_2,\dot{c}_2\in\mathbb{R},
  \end{displaymath}
 
  Este espacio con la suma es un grupo abeliano porque coincide con la suma usual de $\mathbb{R}^4$
  como grupo.
  
  Multiplicando las siguiente expresiones
  \begin{displaymath}
	 \mathbf{a}\mathbf{b} = (a_1b_1, a_1\dot{b}_1+\dot{a}_1b_1,a_1b_2+a_2b_1,a_1\dot{b}_2+\dot{a}_1b_2+a_2\dot{b}_1+\dot{a}_2b_1),
  \end{displaymath}
  {\setlength\arraycolsep{2pt}
   \begin{eqnarray*}
    (\mathbf{a}\mathbf{b})\mathbf{c} & = & (a_1b_1c_1,a_1b_1\dot{c}_1 + a_1\dot{b}_1c_1+\dot{a}_1b_1c_1,a_1b_1c_2+a_1b_2c_1+a_2b_1c_1, \\
                                     &   &  a_1b_1\dot{c}_2+a_1\dot{b}_1c_2+\dot{a}_1b_1c_2+a_1\dot{b}_2c_1 +
	                                          \dot{a}_1b_2c_1+a_2\dot{b}_1c_1+\dot{a}_2b_1c_1),  
   \end{eqnarray*}}
   \begin{displaymath}
	  \mathbf{b}\mathbf{c} = (b_1c_1, b_1\dot{c}_1+\dot{b}_1c_1,b_1c_2+b_2c_1,b_1\dot{c}_2+\dot{b}_1c_2+b_2\dot{c}_1+\dot{b}_2c_1),
   \end{displaymath}
   {\setlength\arraycolsep{2pt}
    \begin{eqnarray*}
    \mathbf{a}(\mathbf{b}\mathbf{c}) & = & (a_1b_1c_1,a_1b_1\dot{c}_1 + a_1\dot{b}_1c_1+\dot{a}_1b_1c_1,a_1b_1c_2+a_1b_2c_1+a_2b_1c_1, \\
                                     &   &  a_1b_1\dot{c}_2+a_1\dot{b}_1c_2+\dot{a}_1b_1c_2+a_1\dot{b}_2c_1 +
	                                          \dot{a}_1b_2c_1+a_2\dot{b}_1c_1+\dot{a}_2b_1c_1),  
   \end{eqnarray*}
   }se obtiene que 
  \begin{displaymath}
	 (\mathbf{a}\mathbf{b})\mathbf{c}=\mathbf{a}(\mathbf{b}\mathbf{c}).
  \end{displaymath}
  
  La conmutatividad del producto se obtiene directo de la conmutatividad de la suma y la multiplicaci'on
  de los n'umeros reales.
  
  Multiplicando las expresiones
  \begin{displaymath}
	 \mathbf{b}+\mathbf{c} = (b_1+c_1,\dot{b}_1+\dot{c}_1,b_2+c_2,\dot{b}_2+\dot{c}_2),
  \end{displaymath}
  {\setlength\arraycolsep{2pt}
   \begin{eqnarray*}
	  \mathbf{a}(\mathbf{b}+\mathbf{c}) & = & (a_1b_1+a_1c_1,a_1\dot{b}_1+a_1\dot{c}_1+\dot{a}_1b_1+\dot{a}c_1,
	                                           a_1b_2+a_1c_2+a_2b_1+a_2c_1, \\
	                                    &   &	 a_1\dot{b}_2+a_1\dot{c}_2+ \dot{a}_1b_2+\dot{a}_1c_2,
	                                           a_2\dot{b}_1+a_2\dot{c}_1+\dot{a}_2b_1+\dot{a}_2c_1),
   \end{eqnarray*}}
  \begin{displaymath}
	 \mathbf{a}\mathbf{b} = (a_1b_1, a_1\dot{b}_1+\dot{a}_1b_1,a_1b_2+a_2b_1,a_1\dot{b}_2+\dot{a}_1b_2+a_2\dot{b}_1+\dot{a}_2b_1),
  \end{displaymath}
  \begin{displaymath}
	 \mathbf{a}\mathbf{c} = (a_1c_1, a_1\dot{c}_1+\dot{a}_1c_1,a_1c_2+a_2c_1,a_1\dot{c}_2+\dot{a}_1c_2+a_2\dot{c}_1+\dot{a}_2c_1),
  \end{displaymath}
  {\setlength\arraycolsep{2pt}
   \begin{eqnarray*}
	  \mathbf{a}\mathbf{b}+\mathbf{a}\mathbf{c} & = & (a_1b_1+a_1c_1,a_1\dot{b}_1+a_1\dot{c}_1+\dot{a}_1b_1+\dot{a}c_1,
	                                                   a_1b_2+a_1c_2+a_2b_1+a_2c_1, \\
	                                            &   &	 a_1\dot{b}_2+a_1\dot{c}_2+ \dot{a}_1b_2+\dot{a}_1c_2,
	                                                   a_2\dot{b}_1+a_2\dot{c}_1+\dot{a}_2b_1+\dot{a}_2c_1),
   \end{eqnarray*}
  }se obtiene que
  \begin{displaymath}
	 \mathbf{a}(\mathbf{b}+\mathbf{c})=\mathbf{a}\mathbf{b}+\mathbf{a}\mathbf{c}
  \end{displaymath}
 \end{proof}

\end{document}